\documentclass[11pt]{article}
\usepackage[a4paper,margin=25mm]{geometry}
\usepackage[T1]{fontenc}
\usepackage{amsmath,amssymb,amsthm,mathtools}
\usepackage{cite}
\usepackage[colorlinks=true,linkcolor=blue,citecolor=blue,urlcolor=blue]{hyperref}
\newtheorem{theorem}{Theorem}
\newtheorem{lemma}[theorem]{Lemma}
\newtheorem{proposition}[theorem]{Proposition}
\newtheorem{corollary}[theorem]{Corollary}
\theoremstyle{definition}

\theoremstyle{remark}
\newtheorem{remark}[theorem]{Remark}
\newcommand{\F}{\mathbb F}
\newcommand{\Pone}{\mathbb P^1}
\newcommand{\Zpos}{\mathbb Z_{>0}}
\DeclareMathOperator{\Span}{span}

\DeclareMathOperator{\lcm}{lcm}
\DeclareMathOperator{\PGL}{PGL}
\DeclareMathOperator{\RS}{RS}
\title{Exact Minimum Extension Degrees for Reed--Solomon Codes\\
under Linear and Nonlinear Repair at the Cut-Set Bound}
\author{Li Yuhan\thanks{The Chinese University of Hong Kong, Shenzhen, China. Email: \texttt{yuhanli@link.cuhk.edu.cn}.}}
\date{}

\begin{document}
\maketitle

\begin{abstract}
We study the minimum extension degree required for a Reed--Solomon
code to admit exact $\F_q$-linear repair at the cut-set bound from every prescribed
set of $d$ helpers. Let $s=d-k+1$, and let $Q_s(n)$ be the minimum
product of $n$ pairwise coprime integers, each at least $s$.
For every prime power $q$ and all $3\le k<d<n\le2d-2$, we prove
that the minimum degree over $\F_q$ is exactly $sQ_s(n)$.
When all surviving nodes are helpers, we also
characterize the admissible extension degrees and bound the number
of coordinates that can attain the cut-set bound.
The required achievability follows from existing construction
methods; we give a short existence proof using splitting subspaces.
For a family of parameters with $n>2d-2$, we obtain strictly
smaller extension degrees, showing that the range restriction cannot
be omitted altogether. Finally, we prove that a nonlinear repair
scheme attaining the cut-set bound induces an $\F_p$-linear repair
scheme on the same code, where $p$ is the characteristic. Consequently,
the extension-degree lower bounds over the prime field remain valid
without a linearity assumption.
\end{abstract}

\noindent\textbf{Index terms---}Reed--Solomon codes, exact repair,
cut-set bound, extension degree, subpacketization, projective sublines,
nonlinear repair.

\section{Introduction}\label{sec:intro}

\subsection{Repair of Reed--Solomon Codes at the Cut-Set Bound}

Maximum distance separable (MDS) codes provide storage redundancy
while allowing the original data to be reconstructed from any $k$
of the $n$ stored symbols. Repairing one missing symbol by first
reconstructing the data, however, generally downloads much more than
the size of that symbol. The regenerating-code framework of
Dimakis et al.~\cite{DGW} quantifies the tradeoff between storage and
repair bandwidth. At the minimum-storage point, a node storing
$\ell$ symbols over a field $K$ and contacting $d$ helpers must
download at least
\begin{equation}\label{eq:intro-cutset}
 \frac{d\ell}{d-k+1}
\end{equation}
in $K$-symbol units, which is the cut-set bound. We study exact single-erasure repair whose
bandwidth equals \eqref{eq:intro-cutset}.

Reed--Solomon (RS) codes are scalar MDS codes. If their symbol field
is $F=\F_{q^\ell}$ and $K=\F_q$, expanding each code symbol in a
$K$-basis of $F$ gives subpacketization $\ell$.
Thus the extension degree determines how finely a symbol can be
divided for repair. This paper asks how small $\ell$ can be when
the cut-set bound must be attained for every coordinate and every
prescribed set of $d$ surviving coordinates. Throughout the linear part of the
paper, the helper functions are $K$-linear; their choice may depend
on both the failed coordinate and the helper set.

\subsection{Previous Results and the Extension-Degree Problem}

The study of repair bandwidth has led to a substantial theory of MDS
array codes. Explicit constructions, including the families of
Ye and Barg~\cite{YBISIT}, show that the cut-set bound is attainable
for broad parameter ranges. For scalar codes, the field structure
places additional restrictions on the available repair maps.
Guruswami and Wootters~\cite{GW} characterized linear repair through
dual codewords. For RS codes, this gives a formulation in terms of
low-degree polynomials whose evaluations span the full symbol field
at the erased coordinate and smaller spaces at the helpers.
Ye and Barg~\cite{YBISIT} constructed RS codes with
asymptotically optimal repair bandwidth.
Dau and Milenkovic~\cite{DM} developed repair schemes for
full-length RS codes that are optimal among linear schemes
in their specified parameter regimes. Optimality among linear
schemes does not by itself imply that the cut-set bound is attained.

Tamo, Ye, and Barg~\cite{TYB} constructed RS codes attaining the
cut-set bound and established a super-exponential subpacketization
lower bound for scalar linear MDS codes. Zhao, Fu, and Fu~\cite{ZFF}
extended this lower-bound framework to a repair model in which each
failed node excludes a prescribed set of potential helpers.
The construction of Tamo, Ye, and Barg uses evaluation points
whose degrees over $K$ are distinct primes satisfying a
congruence condition. Qiu, Fang, Xia, and Fu~\cite{QFFXISIT,QFFX} subsequently
gave a basis-transformation construction of the required repair
subspaces, removing that congruence condition. In particular, their
subspace theorem applies to arbitrary integer relative degrees
at least $s=d-k+1$.

Allowing repair bandwidth above the cut-set bound can substantially
reduce subpacketization. Li, Wang, and Jafarkhani~\cite{LWJ}
constructed scalar RS codes exhibiting this tradeoff.
For a fixed code rate and any fixed $\epsilon>0$,
Guruswami and Jiang~\cite{GJ} constructed RS codes with
subpacketization $n^{O(1/\epsilon)}$ and repair bandwidth at most
$1+\epsilon$ times the cut-set bound for $d=n-1$.
Chowdhury and Vardy~\cite{CV} reduced subpacketization for RS
families whose repair bandwidth approaches that bound at fixed
redundancy. For fixed composite redundancy and $d=n-1$,
Wang et al.~\cite{WLLZZ} obtained further reductions in
subpacketization while retaining asymptotically optimal bandwidth.
Their construction is formulated for rack-aware storage.
With one node per rack, repair from all surviving racks becomes
repair from all $n-1$ surviving nodes, giving the ordinary
model with $d=n-1$.

Recent work has also reduced the amount of data downloaded or read
when repairing specified RS codes.
Kim, Kruglik, and Kiah~\cite{KKK} combined the selection of helper
nodes with linear dependencies among downloaded traces to reduce
the bandwidth of single-erasure repair for full-length RS codes.
Liu, Xu, and Zhang~\cite{LXZ} studied the amount of data read at
the helpers, or I/O cost, for RS codes evaluated on subspaces.
Using exponential sums, they obtained I/O lower bounds and
matching schemes in specified two- and three-parity regimes.

Field and geometric constraints also appear in earlier repair
arguments. The scalar-MDS lower-bound proof in~\cite{TYB} uses
multiplicative invariance of repair subspaces to obtain restrictions
on generated subfields. Duursma and Dau~\cite{DD} used cross ratios
to interpret bandwidth reductions for a specific RS code.

We ask for the exact minimum extension degree when repair attains
the cut-set bound, with the base field fixed and the same
evaluation set supporting every target and every prescribed
helper set. We address this question by relating repair spaces to
the subfields generated by cross ratios of the evaluation
points. For the sufficient direction, we use the existing
construction framework together with the splitting-subspace
criterion of Aggarwal and Ram~\cite{AR} to give a short
existence proof.

Nonlinear repair is a separate issue. Con and Tamo~\cite{CT}
introduced a partition formulation of general repair schemes,
constructed nonlinear schemes that can outperform linear ones,
and obtained a stronger bandwidth lower bound over prime symbol
fields under symmetric downloads.
Our nonlinear result concerns \emph{exact equality} in the cut-set
bound. At equality, the helper partitions are coset partitions of
additive subgroups. These partitions therefore induce repair maps
that are linear over the prime field, allowing the extension-degree
lower bounds over that field to be applied to nonlinear schemes as well.

\subsection{Our Results}

Fix a prime power $q$ and integers $3\le k<d<n$, and put
$s=d-k+1$. Let $L_q(n,k,d)$ denote the smallest extension degree
for which some length-$n$ RS code over $\F_{q^\ell}$ admits
$\F_q$-linear repair attaining the cut-set bound for every target
and every prescribed $d$-element helper set. For $m\ge1$, define
\begin{equation}\label{eq:Q}
 Q_s(m)=\min\left\{\prod_{i=1}^{m}a_i:
 a_i\in\mathbb Z_{\ge s},\quad
 \gcd(a_i,a_j)=1\ (i\ne j)\right\},
 \qquad Q_s(0)=1.
\end{equation}
The integers in this definition need not be prime.
Our principal conclusions are as follows.

\begin{enumerate}
\item \emph{Exact minimum extension degree.}
For every prime power $q$,
\begin{equation}\label{eq:intro-main}
 L_q(n,k,d)=sQ_s(n)
 \qquad(3\le k<d<n\le2d-2).
\end{equation}
The minimum is therefore independent of $q$ in this range.
The proof identifies divisibility and relative-degree conditions
forced by equality in the cut-set bound. For a fixed failed coordinate, two
$d$-element helper sets have at least $2d-(n-1)\ge3$ common
points. This intersection bound orders the relevant sublines
by inclusion and yields the prime-divisor counts needed for
the lower bound.

\item \emph{All helpers and prescribed extension degrees.}
When $d=n-1$, there is a length-$n$ RS code over $\F_{q^\ell}$
attaining the cut-set bound at every coordinate if and only if
$s\mid\ell$ and $\ell/s$ has $n$ pairwise coprime divisors,
each at least $s$. More generally, if at least two
coordinates admit repair at the cut-set bound, their number is bounded by
the maximum size of such a divisor family. In particular, for $s=2$ the exact
minimum is twice the product of the first $n$ primes.

\item \emph{Smaller degrees outside the main range.}
For every prime power $R\ge d+2$,
\begin{equation}\label{eq:intro-smaller}
 L_q(R^2,k,d)
 \le sQ_s(R(d+1))<sQ_s(R^2).
\end{equation}
Theorem~\ref{thm:graphs} gives a more general family using graphs of
low-degree polynomials. These examples show that the restriction in
\eqref{eq:intro-main} cannot be dropped in general.

\item \emph{Transfer to nonlinear repair.}
Write $q=p^e$, where $p$ is prime. Every nonlinear repair scheme
attaining the cut-set bound induces an $\F_p$-linear repair
scheme on the same RS code, with the same helper-output alphabet
sizes. Hence lower bounds over $\F_p$ apply to the degree
$[F:\F_p]=e\ell$. In particular, in the range of
\eqref{eq:intro-main}, nonlinear repair at the cut-set bound requires
$e\ell\ge sQ_s(n)$. When $e=1$, the nonlinear and linear
minimum extension degrees coincide.
\end{enumerate}

\subsection{Organization of the Paper}

Section~\ref{sec:prelim} fixes the notation and repair model,
recalls the standard repair criteria, and collects the projective
geometry facts used in the proofs. Section~\ref{sec:construction}
gives a short existence proof and establishes a sufficient
condition on the evaluation degrees. Section~\ref{sec:lower}
proves the field constraints, the all-helper arithmetic criterion,
and the exact lower bound for $n\le2d-2$.
Section~\ref{sec:small} gives the strict improvement outside this
range. Section~\ref{sec:nonlinear} proves prime-field linearization
at cut-set equality and transfers the extension-degree bounds.
Section~\ref{sec:conclusion} summarizes the results and identifies
a question about the remaining parameter range.

\section{Preliminaries}\label{sec:prelim}

This section introduces the notation and repair model, states the
standard cut-set and dual-code results in the form needed here,
and collects the properties of projective sublines used later.
The additional geometric lemmas are proved here so that the lower
bound can focus on the field degrees associated with repair.

\subsection{Notation and Finite Fields}

Throughout, $3\le k<d<n$, $K=\F_q$, $F=\F_{q^\ell}$,
and $s=d-k+1$.
All subfields lie in a fixed algebraic closure of $K$.
For a field $L$, let $L^\times=L\setminus\{0\}$ and let
$L[X]_{<u}$ consist of zero and the polynomials of degree less
than $u$. We write $\Span_E V$ for the $E$-linear span of a set
$V$. For subfields $E_1,E_2$ containing $K$, write
$E_1E_2=K(E_1\cup E_2)$ for the smallest subfield containing both.
We use the standard identities~\cite[Ch.~2]{LN}
\begin{equation}\label{eq:fields}
 \begin{split}
 \F_{q^a}\F_{q^b}&=\F_{q^{\lcm(a,b)}},\qquad
 \F_{q^a}\cap\F_{q^b}=\F_{q^{\gcd(a,b)}},\\
 \F_{q^a}\subseteq\F_{q^b}&\quad\Longleftrightarrow\quad a\mid b.
 \end{split}
\end{equation}
For a prime $\rho$, $v_\rho(u)$ is its exponent in $u$.
Let $p_j$ be the $j$th prime, $\Pi_j=\prod_{a=1}^j p_a$,
$\Pi_0=1$, $\pi(x)=|\{\rho\le x:\rho\text{ prime}\}|$, and
$\omega(u)=|\{\rho:\rho\text{ prime},\ \rho\mid u\}|$.
In addition to \eqref{eq:Q}, define, for $b\in\Zpos$,
\begin{equation}\label{eq:P}
 P_s(b)=\max\left\{m\ge0:
 \begin{array}{l}
 a_1,\ldots,a_m\mid b,\quad a_i\ge s,\\
 \gcd(a_i,a_j)=1\quad(i\ne j)
 \end{array}\right\}.
\end{equation}
The empty collection is admissible. The notation $b=\ell/s$
will be used only after $s\mid\ell$ has been established.

\subsection{Reed--Solomon Codes and Repair}

For an $n$-element set $A\subset F$, define
\[
 \mathcal C=\RS_k(A)
 =\{(f(a))_{a\in A}:f\in F[X]_{<k}\}.
\]
A repair instance is a pair $(c,H)$ with $c\in A$,
$H\subseteq A\setminus\{c\}$, and $|H|=d$.
The coordinates in $H$ are called helpers, and $H$ is the helper set.
Each helper sends fixed functions of its own symbol in one round;
helpers do not communicate with one another. The functions may
depend on $(c,H)$. Exact repair requires recovery of $f(c)$ for
every $f\in F[X]_{<k}$.
For deterministic helper maps $\mu_x:F\to\mathcal Y_x=\mu_x(F)$,
define their bandwidth in $K$-symbol units by
\begin{equation}\label{eq:repair-cost}
 B(\mu)=\sum_{x\in H}\log_q|\mathcal Y_x|.
\end{equation}
For a $K$-linear helper map,
$\log_q|\mu_x(F)|=\dim_K\mu_x(F)$.
Let $\beta_A(c;H)$ be the minimum of $B(\mu)$ over exact
$K$-linear repair schemes. A constant helper map contributes zero.
Define
\begin{equation}\label{eq:L}
 L_q(n,k,d)=\min\left\{\ell\ge1:
 \begin{array}{l}
 \exists A\subset\F_{q^\ell},\quad |A|=n,\\
 \beta_A(c;H)=d\ell/s\quad\text{for every }(c,H)
 \end{array}\right\}.
\end{equation}
The condition ``for every $(c,H)$'' in \eqref{eq:L} concerns
separate single-erasure repairs of the same code.

For a $K$-subspace $\mathcal P\subset F[X]$, write
$\mathcal P(x)=\{g(x):g\in\mathcal P\}$.
We use the dual-code characterization of Guruswami and
Wootters~\cite[Thm.~4 and Cor.~9]{GW} in the following polynomial form.

\begin{proposition}\label{prop:dual}
For $B=H\cup\{c\}$, let $\beta_B(c;H)$ denote the minimum
$K$-linear repair bandwidth for the punctured code $\RS_k(B)$.
Then
\begin{equation}\label{eq:dual}
 \beta_A(c;H)=\beta_B(c;H)
 =\min_{\substack{\mathcal P\subset F[X]_{<s}\text{ a }K\text{-subspace}\\
              \dim_K\mathcal P=\ell,\ \mathcal P(c)=F}}
       \sum_{x\in H}\dim_K\mathcal P(x).
\end{equation}
\end{proposition}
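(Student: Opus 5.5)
The plan has two parts. First we prove $\beta_A(c;H)=\beta_B(c;H)$. Then we identify $\beta_B(c;H)$ with the minimum over $K$-subspaces $\mathcal P\subset F[X]_{<s}$.

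For the first equality, start from the definition of exact $K$-linear repair. A scheme for $(c,H)$ in $\RS_k(A)$ uses only the symbols at $H$, and must recover $f(c)$ for every $f\in F[X]_{<k}$. Puncturing to $B=H\cup\{c\}$ does not change the set of possible restrictions $(f(x))_{x\in B}$. Since $|B|=d+1\ge k$, these restrictions are exactly the codewords of $\RS_k(B)$. Hence the repair schemes for the two codes are literally the same family of helper maps, and the two minima coincide.

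For the second equality, apply the Guruswami--Wootters characterization \cite[Thm.~4 and Cor.~9]{GW} to the code $\RS_k(B)$, which has length $d+1$. Its dual is a generalized RS code $\RS_{d+1-k}(B)$ with column multipliers $v_x\neq0$, that is, $\{(v_xg(x))_{x\in B}:g\in F[X]_{<s}\}$, since $d+1-k=s$. The characterization says the following. An exact $K$-linear repair scheme of bandwidth $b$ for coordinate $c$ exists if and only if there are $\ell$ dual codewords $g_1,\dots,g_\ell$ satisfying two conditions. The values $v_cg_j(c)$ must span $F$ over $K$. The helper costs must satisfy $\sum_{x\in H}\dim_K\Span_K\{v_xg_j(x)\}\le b$.

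Next we take $\mathcal P=\Span_K\{g_j\}$. Multiplication by the nonzero constant $v_x$ is a $K$-linear bijection of $F$, so it preserves both the spanning condition at $c$ and each dimension $\dim_K\mathcal P(x)$. This removes the multipliers. The spanning condition forces $\dim_K\mathcal P(c)=\ell$, which in turn forces $\dim_K\mathcal P=\ell$. Conversely, any $\ell$-dimensional $\mathcal P$ with $\mathcal P(c)=F$ gives $\ell$ dual codewords through any basis. Minimizing over $\mathcal P$ then yields \eqref{eq:dual}.

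I expect the only delicate step to be the bookkeeping in this translation. The GW theorem is phrased with $\ell$ dual codewords, possibly linearly dependent, and with trace queries $\mathrm{tr}(v_xg_j(x)f(x))$. The helper bandwidth in that theorem is $\dim_K$ of the span of the queries, which matches \eqref{eq:repair-cost} for linear maps. I would check that dependent generators never help: if the generators were dependent, the span at $c$ would have dimension less than $\ell$, contradicting the spanning condition. This justifies restricting to subspaces of dimension exactly $\ell$.
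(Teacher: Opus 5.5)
Your proposal is correct and follows essentially the same route as the paper: you puncture to $B=H\cup\{c\}$, apply the Guruswami--Wootters characterization to $\RS_k(B)$, whose dual is $\{(v_xg(x))_{x\in B}:g\in F[X]_{<s}\}$ with nonzero multipliers, and observe that the multipliers do not change the evaluation dimensions. The only difference is in the dimension normalization, and it is cosmetic: you note that $\ell$ generators whose values span $F$ at $c$ force $\dim_K\mathcal P=\ell$, whereas the paper shrinks an arbitrary $\mathcal P$ with $\mathcal P(c)=F$ to an $\ell$-dimensional $\mathcal P_0$ with $\mathcal P_0(x)\subseteq\mathcal P(x)$, which is equivalent bookkeeping.
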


Indeed, puncturing gives $\RS_k(B)$, whose dual consists of
$(v_xg(x))_{x\in B}$ with $g\in F[X]_{<s}$ and
$v_x=\prod_{z\in B\setminus\{x\}}(x-z)^{-1}$.
These nonzero multipliers do not change the evaluation dimensions.
We invoke the cited characterization without repeating its proof.
The restriction $\dim_K\mathcal P=\ell$ does not increase the minimum:
from any $\mathcal P$ with $\mathcal P(c)=F$, choose
$g_1,\ldots,g_\ell\in\mathcal P$ whose values at $c$ form a $K$-basis
of $F$. Then $\mathcal P_0=\Span_K\{g_1,\ldots,g_\ell\}$ satisfies
\[
 \dim_K\mathcal P_0=\ell,\qquad \mathcal P_0(c)=F,\qquad
 \mathcal P_0(x)\subseteq\mathcal P(x)\quad(x\in H).
\]

We also use the cut-set bound~\cite{DGW,TYB}, including its
equality condition for possibly unequal helper downloads:
\begin{equation}\label{eq:cutset}
 \beta_A(c;H)\ge\frac{d\ell}{s},
\end{equation}
and, for every minimizing repair space at equality,
\begin{equation}\label{eq:equal-ranks}
 s\mid\ell,\qquad
 \dim_K\mathcal P(x)=\ell/s\quad(x\in H).
\end{equation}
At equality in \eqref{eq:cutset}, condition~\eqref{eq:equal-ranks}
follows from the subset cut-set inequalities: every $s$-subset of the helpers must
contribute at least $\ell$ symbols. We will use
\eqref{eq:cutset}--\eqref{eq:equal-ranks} as standard facts.
For every deterministic exact repair scheme, the cut-set inequalities are
\begin{equation}\label{eq:alphabet-cutset}
 \sum_{x\in J}\log_q|\mathcal Y_x|\ge\ell
 \quad(J\subseteq H,\ |J|=s),\qquad
 B(\mu)\ge d\ell/s.
\end{equation}
These are the bounds used in Section~\ref{sec:nonlinear}.

\subsection{Projective Lines, Cross Ratios, and Sublines}

The lower-bound argument associates a subfield with each set of
helper evaluation points. The definitions below let us express
constraints on these subfields through containment of point sets.

Let $L/K$ be a finite extension and put
$\Pone(L)=L\cup\{\infty\}$, where $\infty$ is a formal symbol.
A projective transformation, or projectivity, is a bijection
$g:\Pone(L)\to\Pone(L)$ of the form
$g(x)=(ax+b)/(cx+d)$, where $a,b,c,d\in L$ and $ad-bc\ne0$.
For finite $x$ with $cx+d=0$, set $g(x)=\infty$; at infinity,
set $g(\infty)=a/c$ if $c\ne0$ and $g(\infty)=\infty$ if
$c=0$. These maps form a group under composition, denoted by
$\PGL_2(L)$.
For a set $C\subseteq\Pone(L)$, write $g(C)=\{g(x):x\in C\}$.
Three distinct points $u,v,w$ determine a unique projectivity
$\phi_{u,v,w}$ sending them to $0,\infty,1$, respectively
\cite[Prop.~1.8]{Cameron}. If the three reference points
$u,v,w$ are finite, then for $x\in L\setminus\{v\}$,
\begin{equation}\label{eq:phi}
 \phi_{u,v,w}(x)
 =\frac{(w-v)(x-u)}{(w-u)(x-v)}.
\end{equation}
For four pairwise distinct points, define the cross ratio by
$\chi(u,v,w,x)=\phi_{u,v,w}(x)$. For every $g\in\PGL_2(L)$,
\[
 \chi(g(u),g(v),g(w),g(x))=\chi(u,v,w,x)
\]
\cite[Prop.~4.20]{Cameron}. For pairwise distinct finite
$a,b,c$, the cases involving infinity are
\[
 \begin{aligned}
 \chi(\infty,a,b,c)&=\frac{b-a}{c-a},&
 \chi(a,\infty,b,c)&=\frac{c-a}{b-a},\\
 \chi(a,b,\infty,c)&=\frac{c-a}{c-b},&
 \chi(a,b,c,\infty)&=\frac{c-b}{c-a}.
 \end{aligned}
\]

For $C\subset\Pone(L)$ with $|C|\ge3$, define
\begin{equation}\label{eq:T}
 T(C)=K\bigl(\chi(u,v,w,x):u,v,w,x\in C
                          \text{ pairwise distinct}\bigr).
\end{equation}
If $|C|=3$, this means $T(C)=K$.
For any three distinct $u,v,w\in C$, Lemma~\ref{lem:geometry}
shows that $T(C)$ is generated over $K$ by the values
$\phi_{u,v,w}(x)$ with $x\in C\setminus\{u,v,w\}$.
For a subfield $K\subseteq E\subseteq L$, an $E$-subline is
a set
\[
 D=g(\Pone(E))=\{g(a):a\in E\}\cup\{g(\infty)\},
 \qquad g\in\PGL_2(L).
\]
For any three distinct points $u,v,w\in\Pone(L)$, the unique
$E$-subline containing them is
$D=\phi_{u,v,w}^{-1}(\Pone(E))$; see
\cite[Sec.~4.5, p.~62]{Cameron}.

\begin{lemma}\label{lem:geometry}
Let $C\subseteq\Pone(L)$ with $|C|\ge3$, and let
$u,v,w\in C$ be distinct. Then
\begin{equation}\label{eq:anchors}
 T(C)=K\bigl(\phi_{u,v,w}(x):x\in C\setminus\{u,v,w\}\bigr).
\end{equation}
For every $g\in\PGL_2(L)$, we have $T(g(C))=T(C)$.
For every subfield $K\subseteq E\subseteq L$,
\begin{equation}\label{eq:subline-field}
 T(C)\subseteq E
 \quad\Longleftrightarrow\quad
 C\subseteq D\text{ for an }E\text{-subline }D.
\end{equation}
For $i\notin C$ and for distinct $i,j\in C$, respectively,
\begin{align}
 T(C\cup\{i\})&=T(C)(\phi_{u,v,w}(i)),
                                      \label{eq:add-point}\\
 T(C)&=T(C\setminus\{i\})T(C\setminus\{j\})
       \qquad(|C|\ge5).                \label{eq:delete}
\end{align}
\end{lemma}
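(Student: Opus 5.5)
The plan is to prove \eqref{eq:anchors} first and derive the remaining claims from it together with projective invariance of the cross ratio. Write $T_{uvw}(C)$ for the right-hand side of \eqref{eq:anchors}. Each generator $\phi_{u,v,w}(x)=\chi(u,v,w,x)$ is a cross ratio of four distinct points of $C$, so $T_{uvw}(C)\subseteq T(C)$.

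For the reverse inclusion, put $g=\phi_{u,v,w}$ and $C'=g(C)$. Then $C'$ consists of $0,\infty,1$ together with the values $g(x)$, and all of these values lie in $T_{uvw}(C)$. Take any distinct $a,b,c,x\in C$. By invariance, $\chi(a,b,c,x)=\chi(g(a),g(b),g(c),g(x))$. This is a cross ratio of four points of $\Pone(T_{uvw}(C))$, computed either by \eqref{eq:phi} or by the listed formulas involving $\infty$. It is therefore a rational expression over $K$ in elements of $T_{uvw}(C)$, so it lies in $T_{uvw}(C)$. Hence $T(C)=T_{uvw}(C)$.

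Invariance $T(g(C))=T(C)$ is immediate: $g$ induces a bijection of ordered quadruples of distinct points of $C$ onto those of $g(C)$, and it preserves every cross ratio.

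For \eqref{eq:subline-field}, suppose first that $T(C)\subseteq E$. Then $\phi_{u,v,w}(C)\subseteq\Pone(E)$ by \eqref{eq:anchors}, so $C\subseteq\phi_{u,v,w}^{-1}(\Pone(E))$, which is an $E$-subline. Conversely, suppose $C\subseteq D=h(\Pone(E))$ with $h\in\PGL_2(L)$. Then $T(C)=T(h^{-1}(C))$, and $h^{-1}(C)\subseteq\Pone(E)$. Every cross ratio of points of $\Pone(E)$ lies in $E$ by the explicit formulas, so $T(C)\subseteq E$.

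For \eqref{eq:add-point}, apply \eqref{eq:anchors} to $C\cup\{i\}$ with the same anchors $u,v,w\in C$. The generators are those for $C$ together with $\phi_{u,v,w}(i)$.

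The main step is \eqref{eq:delete}, with $|C|\ge5$ and distinct $i,j\in C$. The inclusion $\supseteq$ is clear, since $T$ is monotone under inclusion of sets. For $\subseteq$, I would pick anchors $u,v,w\in C\setminus\{i,j\}$, which is possible because $|C|\ge5$. By \eqref{eq:anchors}, $T(C)$ is generated by the values $\phi_{u,v,w}(x)$ for $x\in C\setminus\{u,v,w\}$. If $x\ne i$, then $\phi_{u,v,w}(x)\in T(C\setminus\{i\})$, because all of $u,v,w,x$ lie in $C\setminus\{i\}$. The remaining generator $\phi_{u,v,w}(i)$ lies in $T(C\setminus\{j\})$, because $u,v,w,i\in C\setminus\{j\}$. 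So every generator lies in the compositum, which proves the claim. The only point needing care is the availability of three anchors avoiding both $i$ and $j$, and this is exactly where $|C|\ge5$ is used.
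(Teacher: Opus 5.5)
Your proposal is correct and follows essentially the same route as the paper: you prove \eqref{eq:anchors} by normalizing with $\phi_{u,v,w}$ and combining cross-ratio invariance with the explicit formulas, derive the invariance claim, \eqref{eq:subline-field}, and \eqref{eq:add-point} directly from it, and prove \eqref{eq:delete} by choosing the anchors in $C\setminus\{i,j\}$. Your explicit monotonicity argument for the inclusion $\supseteq$ in \eqref{eq:delete}, and your remark that $|C|\ge5$ is needed exactly to place three anchors outside $\{i,j\}$, spell out points the paper leaves implicit.
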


\begin{proof}
Let $E_0$ be the field on the right of \eqref{eq:anchors}.
Each of its displayed generators is a cross ratio of points of
$C$, so $E_0\subseteq T(C)$. Conversely,
$\phi_{u,v,w}(C)\subseteq\Pone(E_0)$, and every cross ratio
of four points in $\Pone(E_0)$ belongs to $E_0$ by the explicit
formulas above. Cross-ratio invariance gives $T(C)\subseteq E_0$.
It also gives $T(g(C))=T(C)$ for every $g\in\PGL_2(L)$.
If $T(C)\subseteq E$, then
$C\subseteq\phi_{u,v,w}^{-1}(\Pone(E))$.
Conversely, if $C\subseteq D=\psi(\Pone(E))$, then
$\psi^{-1}(C)\subseteq\Pone(E)$, so
$T(C)=T(\psi^{-1}(C))\subseteq E$.
Applying \eqref{eq:anchors} to $C\cup\{i\}$ with the same
reference points $u,v,w$ gives \eqref{eq:add-point}.
To prove \eqref{eq:delete}, choose
three reference points $u,v,w\in C\setminus\{i,j\}$. Using
these same points for both fields gives
\[
 T(C\setminus\{i\})T(C\setminus\{j\})
 =K\bigl(\phi_{u,v,w}(x):x\in C\setminus\{u,v,w\}\bigr)
 =T(C).
\]
\end{proof}

\begin{lemma}\label{lem:subline-intersection}
Let $E\subseteq E'\subseteq L$. If an $E$-subline $D$ and
an $E'$-subline $D'$ have three common points, then $D\subseteq D'$.
In particular, two distinct $E$-sublines have at most two common
points.
\end{lemma}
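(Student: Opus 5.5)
Let the three common points be $u,v,w\in D\cap D'$. I would reduce everything to the standard subline $\Pone(E)$ by means of the projectivity $\phi=\phi_{u,v,w}$, which sends $u,v,w$ to $0,\infty,1$. By the uniqueness statement recalled before Lemma~\ref{lem:geometry}, the unique $E$-subline containing $u,v,w$ is $\phi^{-1}(\Pone(E))$. Likewise, the unique $E'$-subline containing them is $\phi^{-1}(\Pone(E'))$.

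Hence $D=\phi^{-1}(\Pone(E))$ and $D'=\phi^{-1}(\Pone(E'))$. Since $E\subseteq E'$, we have $\Pone(E)\subseteq\Pone(E')$. Applying the bijection $\phi^{-1}$ gives $D\subseteq D'$.

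If one prefers not to rely on the cited uniqueness, a direct argument goes through Lemma~\ref{lem:geometry}. Write $D=g(\Pone(E))$. Since $T(\Pone(E))\subseteq E$ by the explicit cross-ratio formulas, invariance gives $T(D)\subseteq E$. Then \eqref{eq:anchors} applied to $D$ with reference points $u,v,w$ yields $\phi(D)\subseteq\Pone(E)$. For the reverse containment, note that $\phi\circ g$ is a projectivity mapping $\Pone(E)$ onto $\phi(D)$. It sends the three points $g^{-1}(u),g^{-1}(v),g^{-1}(w)\in\Pone(E)$ to $0,\infty,1$. Since $\phi\circ g$ agrees with the $E$-rational projectivity $\phi_{g^{-1}u,g^{-1}v,g^{-1}w}$ on three points, the two coincide, so $\phi\circ g$ has coefficients in $E$ up to scaling. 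Therefore $\phi(D)=\Pone(E)$. The main (minor) obstacle is exactly this step: showing that a projectivity fixing three points of $\Pone(E)$ into $\Pone(E)$ is defined over $E$. This follows from uniqueness of the projectivity through three points, together with the explicit formula \eqref{eq:phi} having $E$-coefficients.

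For the final claim, take $E'=E$. If two $E$-sublines $D,D'$ shared three points, the first part applied in both directions would give $D\subseteq D'$ and $D'\subseteq D$, so $D=D'$. Hence distinct $E$-sublines meet in at most two points.
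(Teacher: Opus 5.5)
Your proof is correct and is essentially the paper's argument: normalize by $\phi_{u,v,w}$, use uniqueness of the $E$- and $E'$-sublines through $0,\infty,1$ to get $\phi(D)=\Pone(E)\subseteq\Pone(E')=\phi(D')$, pull back along $\phi^{-1}$, and take $E=E'$ for the last claim. Your optional self-contained check of uniqueness, that $\phi\circ g$ coincides with the $E$-rational map $\phi_{g^{-1}(u),g^{-1}(v),g^{-1}(w)}$, is also sound, although once it is established the preliminary step $\phi(D)\subseteq\Pone(E)$ via $T(D)$ is redundant.
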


\begin{proof}
Let $u,v,w$ be three common points and set
$\theta=\phi_{u,v,w}$, so that
$\theta(u)=0$, $\theta(v)=\infty$, and $\theta(w)=1$.
Uniqueness of a subline through $0,\infty,1$ gives
$\theta(D)=\Pone(E)$ and $\theta(D')=\Pone(E')$.
Since $E\subseteq E'$, we have $\theta(D)\subseteq\theta(D')$
and hence $D\subseteq D'$. Taking $E=E'$ proves the last
assertion.
\end{proof}

The next lemma is needed only for the parameter case
$(n,k,d)=(6,3,4)$, treated in Lemma~\ref{lem:six}.

\begin{lemma}\label{lem:six-geometry}
Suppose $[L:E]$ is an odd prime. There do not exist a set
$C\subseteq\Pone(L)$ with $|C|=6$ and three distinct
$E$-sublines $D_1,D_2,D_3$ such that $|C\cap D_i|=4$ for
$i=1,2,3$ and the two-point sets $C\setminus D_i$ are pairwise
disjoint.
\end{lemma}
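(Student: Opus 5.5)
The plan is to normalize coordinates with a projectivity so that two of the sublines become explicit, and then show that the third subline condition forces a point outside $\Pone(E)$ to be algebraic of degree at most $2$ over $E$. This contradicts $[L:E]$ being an odd prime.

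\emph{Setup and normalization.} Suppose such $C,D_1,D_2,D_3$ exist. Since the three two-point sets $C\setminus D_i$ are pairwise disjoint and $|C|=6$, they partition $C$. Write $C\setminus D_1=\{b_1,b_2\}$, $C\setminus D_2=\{a_1,a_2\}$ and $C\setminus D_3=\{c_1,c_2\}$. Then
\[
 D_1\supseteq\{a_1,a_2,c_1,c_2\},\qquad
 D_2\supseteq\{b_1,b_2,c_1,c_2\},\qquad
 D_3\supseteq\{a_1,a_2,b_1,b_2\}.
\]
By Lemma~\ref{lem:geometry}, images of sublines under elements of $\PGL_2(L)$ are again $E$-sublines and cross ratios are invariant, so I may apply $\phi_{c_1,c_2,a_1}$. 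After this, $c_1=0$, $c_2=\infty$ and $a_1=1$. The unique $E$-subline through $0,\infty,1$ is $\Pone(E)$, so $D_1=\Pone(E)$ and hence $a_2\in E\setminus\{0,1\}$.

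\emph{Describing $D_2$.} The subline $D_2$ contains $0$ and $\infty$, so $D_2=\phi^{-1}(\Pone(E))$ for a projectivity $\phi$ fixing $0$ and $\infty$. Such a projectivity has the form $x\mapsto x/\beta$, so $D_2=\beta E\cup\{\infty\}$. The points $b_1,b_2$ are finite (because $\infty=c_2$), nonzero, and lie in $D_2\setminus D_1$. Thus I can write $b_1=\beta\notin E$ and $b_2=t\beta$ with $t\in E\setminus\{0,1\}$.

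\emph{Using $D_3$.} By \eqref{eq:subline-field}, the four points $1,a_2,\beta,t\beta$ lying on the $E$-subline $D_3$ give
\[
e:=\chi(1,a_2,\beta,t\beta)=\frac{(\beta-a_2)(t\beta-1)}{(\beta-1)(t\beta-a_2)}\in E.
\]
Clearing denominators, $\beta$ is a root of
\[
t(1-e)X^2+\bigl(e(a_2+t)-1-a_2t\bigr)X+a_2(1-e)\in E[X].
\]
The main point to check is that this polynomial is nonzero. If all coefficients vanished, then $t\neq0$ forces $e=1$. The middle coefficient then equals $-(1-a_2)(1-t)$, which is nonzero because $a_2\ne1$ and $t\ne1$.

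Hence $[E(\beta):E]\le2$. But $E\subseteq E(\beta)\subseteq L$ and $[L:E]$ is prime, so $[E(\beta):E]\in\{1,[L:E]\}$. Since $\beta\notin E$, the degree is $[L:E]\ge3$, which is a contradiction.

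The only delicate point is this nondegeneracy check, together with confirming that all points used in the cross ratio are finite and distinct so that \eqref{eq:phi} applies. Both follow from the distinctness of the six points of $C$.
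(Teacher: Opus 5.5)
Your proof is correct and follows essentially the same route as the paper. It uses the same normalization $\phi_{c_1,c_2,a_1}$ (the paper's $\phi_{p,q,a}$ with $p,q\in C\cap D_1\cap D_2$), the same description $D_2=\beta E\cup\{\infty\}$, and a cross ratio of the four points on $D_3$ that yields a nonzero quadratic over $E$ satisfied by $\beta$, contradicting that $[L:E]$ is an odd prime. The differences are cosmetic: you order the cross ratio differently, and you take the contradiction directly from $\beta\notin E$ rather than from $D_1=D_2$.
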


\begin{proof}
The sets $C\cap D_i$ have pairwise intersections of size two.
Let $p,q$ be the two points in $C\cap D_1\cap D_2$ and
choose $a\in(C\cap D_1)\setminus D_2$. Set
$\theta=\phi_{p,q,a}$, so that $\theta(p)=0$,
$\theta(q)=\infty$, and $\theta(a)=1$. Set
\[
 \widetilde C=\theta(C),\qquad
 \widetilde D_i=\theta(D_i)\quad(i=1,2,3).
\]
The map $\theta$ is bijective, so the intersection sizes and
disjointness assumptions are preserved. Since
$0,\infty,1\in\widetilde D_1$, uniqueness of the $E$-subline
through these three points gives $\widetilde D_1=\Pone(E)$.
Write
\[
 \widetilde C\cap\widetilde D_1=\{0,\infty,1,u\},\qquad
 \widetilde C\cap\widetilde D_2=\{0,\infty,x,vx\},
 \qquad u,v\in E\setminus\{0,1\}.
\]
Indeed, for a finite nonzero point
$x\in\widetilde C\cap\widetilde D_2$, the
unique $E$-subline containing $0,\infty,x$ is
$\widetilde D_2=\{xe:e\in E\}\cup\{\infty\}$.
The complement condition gives
$\widetilde C\cap\widetilde D_3=\{1,u,x,vx\}$, whose cross ratios lie in $E$.
In particular,
\[
 r=\chi(x,vx,u,1)
  =\frac{(1-x)(u-vx)}{(1-vx)(u-x)}\in E\setminus\{0,1\}.
\]
Expanding this equality yields
\begin{equation}\label{eq:six-quadratic}
 (1-r)vx^2+[-u-v+r(1+uv)]x+(1-r)u=0.
\end{equation}
Its leading coefficient is nonzero. Hence $[E(x):E]\le2$.
Since this degree divides the odd prime $[L:E]$, we obtain
$x\in E$. This gives
$\widetilde D_1=\widetilde D_2=\Pone(E)$ and hence
$D_1=D_2$, a contradiction.
\end{proof}

\section{A Simple Existence Proof for RS Codes Attaining the Cut-Set Bound}
\label{sec:construction}

This section applies a splitting-subspace criterion to give a short
existence proof and formulate a sufficient condition on evaluation
degrees. The same condition will be used in Section~\ref{sec:small}.

\subsection{Splitting Subspaces}

\begin{theorem}[Aggarwal--Ram {\cite{AR}}]\label{thm:split}
Let $E\subseteq L$ be finite fields, let $N=[L:E]$, and suppose
$s\mid N$. For $\alpha\in L$, there exists an $E$-subspace
$U\subseteq L$ satisfying
\begin{equation}\label{eq:split}
 \dim_EU=N/s,\qquad
 L=\bigoplus_{j=0}^{s-1}\alpha^jU
\end{equation}
if and only if $[E(\alpha):E]\ge s$.
\end{theorem}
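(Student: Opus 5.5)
The statement has an easy direction (necessity) and a hard one (sufficiency). I would prove necessity directly, and for sufficiency I would reduce to a module-theoretic statement about $L$ as an $E[X]$-module and build the subspace $U$ explicitly.

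\emph{Necessity.} Put $M=E(\alpha)$ and $m=[M:E]$, and suppose $m<s$. Then $\alpha^m=\sum_{j<m}c_j\alpha^j$ for some $c_j\in E$. Since $U$ is an $E$-subspace, $c_jU\subseteq U$, so $\alpha^mU\subseteq\sum_{j<m}\alpha^jU$. Multiplying by powers of $\alpha$ gives $\alpha^tU\subseteq\sum_{j<m}\alpha^jU$ for every $t\ge0$. Hence
\[
\sum_{j=0}^{s-1}\alpha^jU=\sum_{j=0}^{m-1}\alpha^jU,
\]
and this space has $E$-dimension at most $mN/s<N$. So it cannot equal $L$, and no splitting subspace exists.

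\emph{Sufficiency: reduction.} Assume $m\ge s$ and $s\mid N$, and let $r=N/m=[L:M]$. Choose an $M$-basis $w_1,\dots,w_r$ of $L$. As an $E[X]$-module with $X$ acting as multiplication by $\alpha$, $L$ is then isomorphic to $M^r$, with $E$-basis $\{\alpha^aw_i:0\le a<m,\ 1\le i\le r\}$. Condition \eqref{eq:split} says exactly that for an $E$-basis $u_1,\dots,u_{N/s}$ of $U$, the $N$ vectors $\alpha^ju_t$ ($0\le j<s$) form an $E$-basis of $L$.

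\emph{Sufficiency when $s\mid m$.} Take
\[
U=\Span_E\{\alpha^{sb}w_i:0\le b<m/s,\ 1\le i\le r\}.
\]
The vectors $\alpha^ju$ for $u$ in this basis run over the monomial basis of each copy exactly once, which gives \eqref{eq:split}.

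\emph{Sufficiency in general: plan.} Write $m=cs+a$ with $0<a<s$. In each copy, the blocks $\alpha^{sb}w_i$ for $b<c$ cover the exponents $0,\dots,cs-1$. This leaves $a$ top exponents per copy, $ra$ leftover basis vectors in total, and $ra$ is divisible by $s$ because $s\mid rm$. The leftover tails have to be grouped into blocks of $s$ that straddle several copies. For a block that takes the last $a'$ exponents of copy $i$ and continues into copy $i'$, I would use a mixed generator such as
\[
u=\alpha^{m-a'}w_i+\alpha^{e}w_{i'},
\]
with the exponent $e$ chosen so that the later powers $\alpha^ju$ pick up the unused tail exponents of copy $i'$.

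The main obstacle is proving independence of the resulting $N$ vectors. For $j\ge a'$, the term $\alpha^{m-a'+j}w_i$ wraps around through the minimal polynomial of $\alpha$ into lower exponents of copy $i$, so the vectors are not simply distinct monomials. I would first try a triangularity argument: order the basis by (copy index, exponent) in a suitable reversed order and show each $\alpha^ju_t$ has a distinct leading monomial with coefficient $1$, the wrapped-around terms falling strictly below it. The mixing coefficient could be made a free parameter in $E$ if triangularity fails.

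If no ordering works, the fallback is the approach of Aggarwal and Ram. Counting (or a generic-determinant argument) shows that the determinant of the $N\times N$ matrix, as a polynomial in the coordinates of $U$'s basis, is not identically zero once $m\ge s$. Its nonvanishing must hold over $E$ itself, not just over an extension, and this is exactly where the hypothesis $m\ge s$ must enter.
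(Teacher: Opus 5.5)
Your necessity argument and the case $s\mid m$ are correct. The substantive part of the theorem, sufficiency when $s\nmid m$, is left as a plan: you never prove that the $N$ vectors $\alpha^ju_t$ are independent, and you leave open whether any ordering gives triangularity. The fallback does not fill this gap. A polynomial over a finite field that is not identically zero can vanish at every $E$-rational point (e.g.\ $X^{|E|}-X$), so a nonzero generic determinant does not produce a $U$ defined over $E$. Moreover, $m\ge s$ is not the hypothesis that controls rationality over $E$: your own necessity argument shows it is needed over every field. Your displayed two-term generator is also too special. When $a$ is small compared with $s$ (say $a=1$), a block of $s$ tail monomials meets roughly $s/a$ copies, so a generator needs one monomial for each copy it meets.

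The triangularity idea does work once the bookkeeping is fixed.
\begin{itemize}
\item List the tail monomials $\alpha^ew_i$ ($cs\le e<m$) by copy index and then by exponent, and cut the list into consecutive segments of length $s$.
\item For a segment meeting copies $i_1<\cdots<i_k$, put $u=\sum_h\alpha^{x_h}w_{i_h}$. Choose $x_h$ so that the copy-$i_h$ component of $\alpha^ju$ is the designated tail monomial exactly during that copy's turn. For $h\ge2$ this gives $x_h\ge cs-s+1\ge1$ because $c\ge1$; this is where $m\ge s$ enters.
\item Work modulo $V_0=\Span_E\{\alpha^ew_i:e<cs\}=\bigoplus_{j<s}\alpha^jU_0$, where $U_0$ is spanned by your standard blocks.
\end{itemize}
Before its turn, a component lies in $V_0$. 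During its turn, it is the leading monomial. After its turn, it is $\alpha^{x_h+j}w_{i_h}$ with $x_h+j\ge m$, which reduces inside $Mw_{i_h}$, i.e.\ in a copy of smaller index than the one carrying the current leading monomial. So, modulo $V_0$, each $\alpha^ju_t$ has coefficient $1$ at its own leading monomial, and its other terms lie in copies of smaller index. The resulting matrix is block unitriangular, hence invertible, and $\dim_EU=rc+ra/s=N/s$.

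For comparison, the paper gives no direct argument. It applies Corollary~2.4 of Aggarwal--Ram: a splitting subspace of dimension $N/s$ exists if and only if the operator has at most $N/s$ nonconstant invariant factors. It then notes that $x\mapsto\alpha x$ has $N/m$ such factors, all equal to the minimal polynomial of $\alpha$. Completed as above, your argument would be a self-contained elementary proof for this special operator, which is all the paper needs.
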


Here we apply Corollary~2.4 of~\cite{AR}
to the multiplication map
$M_\alpha:L\to L$, $x\mapsto\alpha x$.
If $r=[E(\alpha):E]$, this map has $N/r$ nonconstant invariant
factors, all equal to the minimal polynomial of $\alpha$ over $E$.
The corollary requires the number of nonconstant invariant factors
to be at most $N/s$, so its condition becomes $N/r\le N/s$,
or $r\ge s$.

In the repair construction, we apply this theorem with $L=F$,
$E=K(x:x\in H)$, and $\alpha=c$, where $c$ is the target
and $H$ is the helper set. For $\mathcal P=U[X]_{<s}$, the
space of polynomials of degree less than $s$ with coefficients
in $U$, the splitting condition gives $\mathcal P(c)=F$, and
$\mathcal P(x)=U$ for every $x\in H$. Since $\dim_KU=\ell/s$,
Proposition~\ref{prop:dual} gives repair cost $d\ell/s$,
attaining the cut-set bound \eqref{eq:cutset}.

\subsection{A Sufficient Condition on Evaluation Degrees}

To apply Theorem~\ref{thm:split} to every helper set, we require
the target point to have relative degree at least $s$ over the
field generated by those helpers. The quotient in
\eqref{eq:degree-condition} expresses this relative degree using
the degrees of the evaluation points over $K$.

\begin{proposition}\label{prop:exist}
Let $b\in\Zpos$, and let $f_1,\ldots,f_n$ be distinct positive
divisors of $b$ such that
\begin{equation}\label{eq:degree-condition}
 \frac{f_i}{\gcd(f_i,\lcm_{j\in J}f_j)}\ge s
 \quad\text{for every }i\notin J\subseteq\{1,\ldots,n\},\ |J|=d.
\end{equation}
Then there exists a length-$n$ RS code over $\F_{q^{sb}}$
that attains the cut-set bound for every repair instance.
\end{proposition}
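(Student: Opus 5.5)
Set $\ell=sb$ and $F=\F_{q^{sb}}$. For each $i$, choose an evaluation point $a_i$ that generates $\F_{q^{f_i}}$ over $K$, i.e.\ $K(a_i)=\F_{q^{f_i}}$. Such a point exists because $f_i\mid b\mid sb$, and $\F_{q^{f_i}}$ always contains a primitive element. Since the $f_i$ are distinct, the fields $K(a_i)$ are distinct, so the points $a_i$ are pairwise distinct. Put $A=\{a_1,\ldots,a_n\}$. By the cut-set bound \eqref{eq:cutset}, it suffices to show that $\beta_A(c;H)\le d\ell/s$ for every repair instance $(c,H)$.

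Fix $c=a_i$ and $H=\{a_j:j\in J\}$ with $|J|=d$ and $i\notin J$. By \eqref{eq:fields}, the helper field is
\[
E=K(a_j:j\in J)=\F_{q^{m}},\qquad m=\lcm_{j\in J}f_j .
\]
Since every $f_j$ divides $b$, we have $m\mid b$, and therefore $N=[F:E]=sb/m$ is divisible by $s$.

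Next we need the relative degree of the target. The field $E(c)$ is the compositum $\F_{q^m}\F_{q^{f_i}}=\F_{q^{\lcm(m,f_i)}}$. Hence
\[
[E(c):E]=\frac{\lcm(m,f_i)}{m}=\frac{f_i}{\gcd(f_i,m)}\ge s,
\]
where the inequality is hypothesis \eqref{eq:degree-condition}.

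Now apply Theorem~\ref{thm:split} with $L=F$, this $E$, and $\alpha=c$. It gives an $E$-subspace $U\subseteq F$ with $\dim_E U=N/s$ and $F=\bigoplus_{t=0}^{s-1}c^tU$. Let $\mathcal P=U[X]_{<s}$, the polynomials of degree less than $s$ with coefficients in $U$. This is a $K$-subspace of $F[X]_{<s}$ with $\dim_K\mathcal P=s\dim_K U=s\cdot(N/s)\cdot m=sb=\ell$.

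We check the two repair conditions.
\begin{itemize}
\item At the target, $\mathcal P(c)=\sum_t c^tU=F$ by the splitting property.
\item At each helper $x\in H$, we have $x\in E$ and $U$ is an $E$-space, so $x^tU\subseteq U$ for every $t$. Hence $\mathcal P(x)\subseteq U$; taking constant polynomials shows $\mathcal P(x)=U$. Thus $\dim_K\mathcal P(x)=m\cdot N/s=b=\ell/s$.
\end{itemize}

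By Proposition~\ref{prop:dual}, $\beta_A(c;H)\le d\ell/s$. Combined with \eqref{eq:cutset}, this gives equality for every repair instance.

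The only delicate point is the $E$-invariance of $U$, which is what makes $\mathcal P(x)$ collapse to $U$ at each helper. It holds because Theorem~\ref{thm:split} produces an $E$-subspace rather than merely a $K$-subspace. The rest is routine degree bookkeeping with \eqref{eq:fields}.
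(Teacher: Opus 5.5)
Your proposal is correct and follows the paper's proof essentially step for step: points of exact degree $f_i$, the helper field $\F_{q^{\lcm_{j\in J}f_j}}$, the relative-degree computation via \eqref{eq:fields}, Theorem~\ref{thm:split} with $\alpha=c$, and the repair space $U[X]_{<s}$ fed into Proposition~\ref{prop:dual}. Your explicit check that $s\mid[F:E]$, which the splitting theorem requires, is a detail the paper leaves implicit.
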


\begin{proof}
Set $F=\F_{q^{sb}}$ and choose $\alpha_i\in\F_{q^{f_i}}$
of exact degree $f_i$ over $K$. These elements can be chosen as
roots of irreducible polynomials of the respective degrees over $K$.
Distinct degrees give distinct points, so
$A=\{\alpha_1,\ldots,\alpha_n\}$ has size $n$.
Fix $i\notin J$, $|J|=d$, and put
\[
 u_J=\lcm_{j\in J}f_j,\qquad
 E_J=K(\alpha_j:j\in J)=\F_{q^{u_J}}.
\]
The field identities \eqref{eq:fields} give
\[
 [E_J(\alpha_i):E_J]
 =\frac{f_i}{\gcd(f_i,u_J)}\ge s,
 \qquad [F:E_J]=s\frac{b}{u_J}.
\]
Theorem~\ref{thm:split} therefore supplies an $E_J$-space
$U\subseteq F$ such that
\[
 \dim_{E_J}U=b/u_J,\qquad \dim_KU=b,
 \qquad F=\bigoplus_{a=0}^{s-1}\alpha_i^aU.
\]
For this repair instance take
\begin{equation}\label{eq:repair-space-construction}
 \mathcal P=U[X]_{<s}
 =\left\{\sum_{a=0}^{s-1}u_aX^a:u_a\in U\right\}.
\end{equation}
Then
\[
 \dim_K\mathcal P=sb,\qquad
 \mathcal P(\alpha_i)=F,\qquad
 \mathcal P(\alpha_j)=\sum_{a=0}^{s-1}\alpha_j^aU=U
 \quad(j\in J).
\]
The last equality uses $\alpha_j\in E_J$ and the constant
term in \eqref{eq:repair-space-construction}.
Proposition~\ref{prop:dual} gives bandwidth $db=d\ell/s$.
The points $A$ are fixed, although $U$ may depend on $(i,J)$.
\end{proof}

\subsection{Attaining the General Upper Bound}

\begin{corollary}\label{cor:upper}
For every prime power $q$ and all $3\le k<d<n$,
\begin{equation}\label{eq:upper}
 L_q(n,k,d)\le sQ_s(n).
\end{equation}
More generally, if $s\mid\ell$ and $P_s(\ell/s)\ge n$,
there is a length-$n$ RS code over $\F_{q^\ell}$ admitting
$K$-linear repair at the cut-set bound for every target and
every $d$-element helper set.
\end{corollary}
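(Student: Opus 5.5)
The plan is to deduce Corollary~\ref{cor:upper} directly from Proposition~\ref{prop:exist}, proving the general statement first and then specializing. Suppose $s\mid\ell$ and put $b=\ell/s$. By the hypothesis $P_s(b)\ge n$, there are pairwise coprime divisors $a_1,\ldots,a_n$ of $b$ with every $a_i\ge s$. I take $f_i=a_i$.

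First I check that these are distinct positive divisors of $b$. If $a_i=a_j$ for some $i\ne j$, then $\gcd(a_i,a_j)=a_i\ge s\ge2$, which contradicts coprimality. Here $s=d-k+1\ge2$ because $k<d$.

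Next I verify \eqref{eq:degree-condition}. Fix $i\notin J$ and set $u_J=\lcm_{j\in J}f_j$. Since $f_i$ is coprime to every $f_j$ with $j\in J$, it is coprime to $u_J$, so $\gcd(f_i,u_J)=1$. The quotient in \eqref{eq:degree-condition} is then $f_i\ge s$. Proposition~\ref{prop:exist} now gives a length-$n$ RS code over $\F_{q^{sb}}=\F_{q^\ell}$ that attains the cut-set bound for every target and every $d$-element helper set. This proves the general assertion.

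For \eqref{eq:upper}, choose pairwise coprime integers $a_1,\ldots,a_n\ge s$ whose product realizes $Q_s(n)$; the minimum in \eqref{eq:Q} exists because the set is nonempty (distinct primes at least $s$ work) and consists of positive integers. Put $b=Q_s(n)$. Each $a_i$ divides $b$, so $P_s(b)\ge n$. Applying the general statement with $\ell=sb$ yields $L_q(n,k,d)\le sQ_s(n)$.

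There is no substantive obstacle. The only points needing care are that distinctness of the $f_i$ follows from coprimality together with $s\ge2$, and that coprimality to each $f_j$ passes to coprimality with their lcm.
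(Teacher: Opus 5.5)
Your proposal is correct and follows essentially the same route as the paper: take $f_i=a_i$, note that $\gcd(f_i,\lcm_{j\in J}f_j)=1$ and that the $f_i$ are distinct because they are pairwise coprime and at least $2$, and apply Proposition~\ref{prop:exist}. As in the paper, you obtain the first claim by taking a minimizing tuple in \eqref{eq:Q}. You also add the small justifications that $s\ge2$ and that the minimum defining $Q_s(n)$ exists, which the paper leaves implicit.
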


\begin{proof}
Put $b=\ell/s$, and choose pairwise coprime divisors
$a_1,\ldots,a_n$ of $b$ with $a_i\ge s$.
For $f_i=a_i$ and any $J$ excluding $i$,
\[
 \gcd(f_i,\lcm_{j\in J}f_j)=1.
\]
The degrees are distinct because they are pairwise coprime and
each is at least $2$. Proposition~\ref{prop:exist} applies.
Taking a minimizing tuple in \eqref{eq:Q} gives the first claim.
\end{proof}

\section{Lower Bounds on the Extension Degree}\label{sec:lower}

This section proves that every code satisfying the repair
requirements in \eqref{eq:L} has $\ell\ge sQ_s(n)$ when
$n\le2d-2$.
We first obtain field-degree conditions from one repair instance
attaining the cut-set bound. For $d=n-1$, these conditions bound the
number of coordinates attaining this bound and give an exact existence
criterion at a prescribed degree. When $n\le2d-2$, any two
$d$-element helper sets for a fixed target have at least three
common points. We use this intersection bound to count prime
divisors of the cross-ratio field degree.
An arithmetic comparison completes the proof; the six-point
case is handled separately.

\subsection{Field Constraints from Cut-Set Equality}

\begin{lemma}\label{lem:field}
If $(c,H)$ attains the cut-set bound, then $s\mid\ell$ and,
with $b=\ell/s$,
\begin{equation}\label{eq:local}
 [T(H):K]\mid b,
 \qquad [T(H\cup\{c\}):T(H)]\ge s.
\end{equation}
\end{lemma}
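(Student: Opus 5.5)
The plan is to take a minimizing repair space $\mathcal P$ from Proposition~\ref{prop:dual} at cut-set equality. We know $s\mid\ell$ from \eqref{eq:equal-ranks}. Put $V_x=\mathcal P(x)$, so that $\dim_K V_x=b$ for $x\in H$, $\dim_K\mathcal P=\ell=sb$ and $\mathcal P(c)=F$. We will show that each $V_x$ is stable under multiplication by every cross ratio of helper points, which gives the first claim, and then compare $\mathcal P(c)$ with a small field translate of one $V_x$, which gives the second. We may assume $s\ge2$ (as $d>k$), and $d-s=k-1\ge2$, so $H$ contains an $s$-subset $S$ together with at least two further points.

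\emph{Step 1 (interpolation).} Fix an $s$-subset $S=\{x_1,\ldots,x_s\}\subseteq H$. Since every $g\in\mathcal P$ has degree $<s$, the evaluation map $\mathcal P\to\bigoplus_i V_{x_i}$ is injective. Comparing dimensions ($sb$ on both sides) shows that it is an isomorphism. Hence for every point $y$ we have $\mathcal P(y)=\sum_i\lambda_i(y)V_{x_i}$, where $\lambda_i$ are the Lagrange basis polynomials for $S$. Indeed, every tuple $(v_1,\ldots,v_s)\in\bigoplus_i V_{x_i}$ is realized by some $g\in\mathcal P$, and $g(y)=\sum_i\lambda_i(y)v_i$. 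In particular, $\lambda_i(y)V_{x_i}\subseteq\mathcal P(y)$ for each $i$.

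\emph{Step 2 (invariance).} For a helper $y\in H\setminus S$, each $\lambda_i(y)$ is nonzero and $\dim_K\lambda_i(y)V_{x_i}=b=\dim_K\mathcal P(y)$. Therefore $\lambda_i(y)V_{x_i}=\mathcal P(y)$ for every $i$, and so $V_{x_j}=\bigl(\lambda_i(y)/\lambda_j(y)\bigr)V_{x_i}$. Up to a factor independent of $y$, the ratio $\lambda_i(y)/\lambda_j(y)$ equals $(y-x_j)/(y-x_i)$. Taking two helpers $y,y'\notin S$ and dividing the two resulting relations, the element $\frac{(y-x_j)(y'-x_i)}{(y-x_i)(y'-x_j)}$ maps $V_{x_i}$ onto itself; this element is a cross ratio of $x_i,x_j,y,y'$. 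The set $\{\alpha\in F:\alpha V\subseteq V\}$ is a subfield of $F$ containing $K$. Since any four distinct helpers can be placed as $x_i,x_j\in S$ and $y,y'\notin S$, and by Lemma~\ref{lem:geometry} with anchors $x_i,y,x_j$ (say), this stabilizer contains $T(H)$. Thus every $V_x$ is a $T(H)$-vector space, and $[T(H):K]\mid b$. The main obstacle is this bookkeeping: we must make sure the ratios produced really generate $T(H)$ (using \eqref{eq:anchors}) and that all $V_{x}$, including those for $x\in S$ after re-choosing $S$, are covered.

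\emph{Step 3 (the target).} Apply Step 1 with $y=c$: $F=\mathcal P(c)=\sum_i\lambda_i(c)V_{x_i}$. By Step 2, $V_{x_i}=\mu_iV_{x_1}$, with $\mu_i=\lambda_1(y)/\lambda_i(y)$ for a fixed helper $y\notin S$. Then $F=\nu_1\sum_i r_iV_{x_1}$, where
\[
 \nu_1=\lambda_1(c),\qquad r_i=\frac{\lambda_i(c)\,\lambda_1(y)}{\lambda_1(c)\,\lambda_i(y)}.
\]
Each $r_i$ is a cross ratio of $x_1,x_i,c,y$, so $r_i\in M=T(H\cup\{c\})$ and $r_1=1$. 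Suppose $t=[M:T(H)]<s$. Since $V_{x_1}$ is a $T(H)$-space, $\sum_ir_iV_{x_1}\subseteq MV_{x_1}$, and $MV_{x_1}$ is the sum of $t$ translates of $V_{x_1}$ by a $T(H)$-basis of $M$. This gives $\dim_K F\le tb<sb=\ell$, a contradiction. Hence $[T(H\cup\{c\}):T(H)]\ge s$.
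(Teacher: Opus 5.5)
Your proof is correct and follows essentially the same route as the paper: Lagrange interpolation on an $s$-subset of helpers, equal dimensions forcing $\lambda_i(y)V_{x_i}=\mathcal P(y)$, a multiplicative stabilizer field that absorbs the helper cross ratios, and a dimension count on $\mathcal P(c)=F$. The differences are cosmetic. You vary $S$ to capture every cross ratio directly; the bookkeeping you flagged closes because all $V_x$ are nonzero scalar multiples of one another and so share one stabilizer. You also bound $[T(H\cup\{c\}):T(H)]$ directly, whereas the paper fixes $\phi_{z,w,y_0}$ and passes through $[E(\mu):E]$ with $\mu=\phi_{z,w,y_0}(c)$.
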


\begin{proof}
We prove $[T(H):K]\mid b$ by viewing a helper evaluation space
as a vector space over a field containing $T(H)$.
The full evaluation space at the target will then give the
relative-degree inequality.
Choose a minimizing repair space $\mathcal P$ in
\eqref{eq:dual}. By \eqref{eq:equal-ranks}, $s\mid\ell$ and
$\dim_K\mathcal P(x)=b$ for every $x\in H$.
Choose distinct $z,w,y_0\in H$ and a set
$Y\subseteq H\setminus\{z,w\}$ of size $s$ containing $y_0$.
This is possible because $d-2\ge s$, using $k\ge3$.
For $y\in Y$ define the Lagrange polynomial
\[
 L_y(X)=\prod_{u\in Y\setminus\{y\}}\frac{X-u}{y-u}.
\]
Evaluation on $Y$ is injective on $F[X]_{<s}$, and
\[
 \dim_K\mathcal P=sb
 =\sum_{y\in Y}\dim_K\mathcal P(y).
\]
Consequently, interpolation gives
\begin{equation}\label{eq:lagrange-space}
 \mathcal P=\bigoplus_{y\in Y}L_y(X)\mathcal P(y).
\end{equation}
At $z$, every summand $L_y(z)\mathcal P(y)$ has dimension $b$
and their sum also has dimension $b$. Thus
\[
 L_y(z)\mathcal P(y)=\mathcal P(z)=:U\ne\{0\}
 \quad(y\in Y).
\]
Set
\begin{equation}\label{eq:stabilizer-field}
 E=\{\lambda\in F:\lambda U\subseteq U\}.
\end{equation}
This set contains $K$ and is closed under addition, subtraction,
and multiplication. For $0\ne\lambda\in E$, multiplication
by $\lambda$ is injective on the finite set $U$, so
$\lambda U=U$ and $\lambda^{-1}U=U$.
Therefore $E$ is a field, $U$ is an $E$-space, and
\begin{equation}\label{eq:stabilizer-dimension}
 b=[E:K]\dim_EU.
\end{equation}

We next show that $T(H)\subseteq E$.
For $y\in Y$ and $x\in F\setminus Y$, put
$D_y(x)=L_y(x)/L_y(z)$. For $y\in Y\setminus\{y_0\}$ and
$x\in F\setminus Y$, define
\begin{equation}\label{eq:R}
 R_y(x):=\frac{D_y(x)}{D_{y_0}(x)}
 =\frac{(z-y)(x-y_0)}{(z-y_0)(x-y)}.
\end{equation}
If $x\in H\setminus Y$, \eqref{eq:lagrange-space} gives
$\mathcal P(x)=\sum_{y\in Y}D_y(x)U$.
All summands and their sum have dimension $b$, so
$R_y(x)U=U$ and hence $R_y(x)\in E^\times$.
Let $\phi=\phi_{z,w,y_0}$, so that
$\phi(z)=0$, $\phi(w)=\infty$, and $\phi(y_0)=1$, and put
$a_y=\phi(y)$ for $y\in Y\setminus\{y_0\}$.
Substitution in \eqref{eq:R} yields
\begin{equation}\label{eq:ay}
 a_y=R_y(w)\in E\setminus\{0,1\}.
\end{equation}
For $x\in F\setminus(Y\cup\{w\})$, it also gives
\begin{equation}\label{eq:inverse-R}
 R_y(x)=\frac{a_y(\phi(x)-1)}{\phi(x)-a_y},
 \qquad \phi(x)=\frac{a_y(R_y(x)-1)}{R_y(x)-a_y}.
\end{equation}
For $x\in H\setminus(Y\cup\{z,w\})$, the denominator
$R_y(x)-a_y$ cannot vanish: the first identity in
\eqref{eq:inverse-R} would then give $a_y(a_y-1)=0$.
As $s\ge2$, one may fix $y\ne y_0$ in $Y$.
Together with \eqref{eq:ay} and
$\phi(z)=0$, $\phi(w)=\infty$, $\phi(y_0)=1$, this proves
\[
 \phi(H)\subseteq\Pone(E),\qquad T(H)\subseteq E.
\]
Equation~\eqref{eq:stabilizer-dimension} gives $[T(H):K]\mid b$.

For the relative-degree bound, put $\mu=\phi(c)$ and set
$R_{y_0}(c)=1$.
Equations~\eqref{eq:R}--\eqref{eq:inverse-R} show that
$R_y(c)\in E(\mu)$ for every $y\in Y$. Hence
\[
 F=\mathcal P(c)
  =D_{y_0}(c)\sum_{y\in Y}R_y(c)U
  \subseteq D_{y_0}(c)\Span_{E(\mu)}U.
\]
If $h=[E(\mu):E]$ and $\eta_1,\ldots,\eta_h$ is an
$E$-basis of $E(\mu)$, then
\[
 \Span_{E(\mu)}U=\sum_{a=1}^h\eta_aU,
 \qquad sb=\ell\le hb.
\]
Thus $h\ge s$. By \eqref{eq:add-point},
$T(H\cup\{c\})=T(H)(\mu)$, and $T(H)\subseteq E$ gives
\[
 [T(H\cup\{c\}):T(H)]
 \ge [E(\mu):E]\ge s.
\]
\end{proof}

\subsection{All Available Helpers: Coordinate Counts and Exact Feasibility}

In this subsection only, let $d=n-1$ and $s=n-k$.
We first bound the number of coordinates that attain the cut-set
bound in a fixed code. Requiring all $n$ coordinates to attain
the bound gives a necessary condition that matches the sufficient
condition in Corollary~\ref{cor:upper}.
Define
\[
 \mathcal M(A)=\{i\in A:
 \beta_A(i;A\setminus\{i\})=(n-1)\ell/s\},
 \qquad N_{\rm cs}(A)=|\mathcal M(A)|.
\]

\begin{theorem}
\label{thm:count}
Let $q$ be a prime power, and let $\ell,k,n$ be integers with
$\ell\ge1$, $k\ge3$, and $n-k\ge2$.
Set $d=n-1$, $s=n-k$, $K=\F_q$, and $F=\F_{q^\ell}$.
For every $n$-element evaluation set $A\subset F$, the number
$N_{\rm cs}(A)$ of coordinates admitting $K$-linear repair at
the cut-set bound from all $n-1$ surviving coordinates satisfies
\begin{equation}\label{eq:count}
 N_{\rm cs}(A)\le
 \begin{cases}
 0,&s\nmid\ell,\\[2pt]
 \min\{n,\max\{1,P_s(\ell/s)\}\},&s\mid\ell.
 \end{cases}
\end{equation}
\end{theorem}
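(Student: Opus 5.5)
The plan is to apply Lemma~\ref{lem:field} separately to each coordinate in $\mathcal M(A)$, and then to use the deletion identity \eqref{eq:delete} to convert the resulting field degrees into pairwise coprime divisors of $b=\ell/s$.

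\textbf{Case $s\nmid\ell$.} If some $c\in\mathcal M(A)$ existed, then Lemma~\ref{lem:field}, applied to the instance $(c,A\setminus\{c\})$, would force $s\mid\ell$. Hence $N_{\rm cs}(A)=0$ when $s\nmid\ell$.

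\textbf{Case $s\mid\ell$.} Put $b=\ell/s$. The bound $N_{\rm cs}(A)\le n$ is trivial, and if $N_{\rm cs}(A)\le1$ there is nothing more to prove. So assume $m=N_{\rm cs}(A)\ge2$, and list the repairable coordinates as $c_1,\dots,c_m$. It remains to show $m\le P_s(b)$. Define
\[
 t=[T(A):K],\qquad t_i=[T(A\setminus\{c_i\}):K].
\]
Since $A=(A\setminus\{c_i\})\cup\{c_i\}$, Lemma~\ref{lem:field} applied to $(c_i,A\setminus\{c_i\})$ gives two facts:
\[
 t_i\mid b,\qquad a_i:=t/t_i=[T(A):T(A\setminus\{c_i\})]\ge s.
\]

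\textbf{Combining pairs via the deletion identity.} Because $n\ge k+2\ge5$, identity \eqref{eq:delete} applies to $C=A$ with $i=c_i$, $j=c_j$ for $i\ne j$:
\[
 T(A)=T(A\setminus\{c_i\})\,T(A\setminus\{c_j\}).
\]
By \eqref{eq:fields}, the degree of a compositum of two subfields of $F$ over $K$ is the lcm of their degrees, so $t=\lcm(t_i,t_j)$. This has two consequences.
\begin{itemize}
\item[(a)] Since $t_i,t_j\mid b$, we get $t\mid b$, and therefore every $a_i$ divides $t$ and hence divides $b$.
\item[(b)] Write $g=\gcd(t_i,t_j)$. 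Then $a_i=\lcm(t_i,t_j)/t_i=t_j/g$ and $a_j=t_i/g$, and these are coprime.
\end{itemize}

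\textbf{Conclusion.} The numbers $a_1,\dots,a_m$ are therefore pairwise coprime divisors of $b$, each at least $s$. By the definition \eqref{eq:P}, this gives $m\le P_s(b)$. The $a_i$ are automatically distinct, since they are pairwise coprime and each is at least $s\ge2$, though \eqref{eq:P} does not require distinctness anyway. Combining the cases yields \eqref{eq:count}.

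The main point to get right is that the pairwise lcm relation requires at least two repairable coordinates and $|A|\ge5$; this is exactly why the bound carries the $\max\{1,\cdot\}$ term. The substantive input is entirely Lemma~\ref{lem:field}; the rest is arithmetic with degrees via \eqref{eq:fields}.
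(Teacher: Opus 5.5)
Your proposal is correct and follows essentially the same route as the paper's proof. Both apply Lemma~\ref{lem:field} to each instance $(c_i,A\setminus\{c_i\})$ with $c_i\in\mathcal M(A)$, use \eqref{eq:delete} and \eqref{eq:fields} to get $t=\lcm(t_i,t_j)\mid b$, and conclude that the quotients $a_i=t/t_i\ge s$ are pairwise coprime divisors of $b$, so that $N_{\rm cs}(A)\le P_s(b)$.
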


\begin{proof}
The first case follows from \eqref{eq:equal-ranks}.
Suppose $s\mid\ell$ and put $b=\ell/s$.
If $|\mathcal M(A)|\le1$, the claimed bound is immediate.
Assume $|\mathcal M(A)|\ge2$ and write
\[
 T=T(A),\quad t=[T:K],\quad
 h_i=[T(A\setminus\{i\}):K].
\]
Lemma~\ref{lem:field} gives $h_i\mid b$ and $t/h_i\ge s$
for $i\in\mathcal M(A)$.
Since $n=k+s\ge5$, \eqref{eq:delete} and \eqref{eq:fields} give
\begin{equation}\label{eq:lcm-deletion}
 t=\lcm(h_i,h_j)\mid b
 \quad(i,j\in\mathcal M(A),\ i\ne j).
\end{equation}
Therefore the integers $a_i=t/h_i$ satisfy
\[
 a_i\mid b,\qquad a_i\ge s,
 \qquad \gcd(a_i,a_j)
 =\frac{t}{\lcm(h_i,h_j)}=1.
\]
They form an admissible family in \eqref{eq:P}, so
$|\mathcal M(A)|\le P_s(b)$. Together with
$|\mathcal M(A)|\le n$, this proves the bound.
\end{proof}

\begin{theorem}\label{thm:all}
Let $k,n$ be integers with $k\ge3$ and $n-k\ge2$,
and set $s=n-k$ and $d=n-1$.
For every prime power $q$ and every $\ell\in\Zpos$,
\begin{equation}\label{eq:all-criterion}
 \begin{split}
 &\exists A\subseteq\F_{q^\ell},\ |A|=n,
 \quad \beta_A(i;A\setminus\{i\})=\frac{(n-1)\ell}{s}
       \quad\text{for every }i\in A\\
 &\hspace{35mm}\Longleftrightarrow\quad
 s\mid\ell\ \text{ and }\ P_s(\ell/s)\ge n.
 \end{split}
\end{equation}
In particular,
\begin{equation}\label{eq:all-minimum}
 L_q(n,k,n-1)=sQ_s(n).
\end{equation}
\end{theorem}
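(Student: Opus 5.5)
The criterion is proved one direction at a time, and the minimum formula then follows from it.

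For the direction ``$\Leftarrow$'', suppose $s\mid\ell$ and $P_s(\ell/s)\ge n$. This is exactly the hypothesis of the second statement of Corollary~\ref{cor:upper}. That corollary yields a length-$n$ RS code over $\F_{q^\ell}$ attaining the cut-set bound for every target and every $d$-element helper set. When $d=n-1$, each target $i$ has exactly one helper set, namely $A\setminus\{i\}$. Hence $\beta_A(i;A\setminus\{i\})=(n-1)\ell/s$ for every $i\in A$.

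For the direction ``$\Rightarrow$'', suppose such an $A$ exists, so that $\mathcal M(A)=A$ and $N_{\rm cs}(A)=n$. The first case of Theorem~\ref{thm:count} would give $N_{\rm cs}(A)\le 0<n$ if $s\nmid\ell$, so $s\mid\ell$. With $s\mid\ell$, the bound \eqref{eq:count} reads $n\le\min\{n,\max\{1,P_s(\ell/s)\}\}$. Since $n\ge k+2\ge5>1$, this forces $P_s(\ell/s)\ge n$. One could also rerun the proof of Theorem~\ref{thm:count} with $|\mathcal M(A)|=n\ge2$: it produces $n$ pairwise coprime divisors $a_i=t/h_i$ of $b$ with $a_i\ge s$. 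Only the hypotheses $k\ge3$ and $s\ge2$ are needed, and they are assumed.

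For \eqref{eq:all-minimum}, by \eqref{eq:L} with $d=n-1$, $L_q(n,k,n-1)$ is the least $\ell$ satisfying the left side of \eqref{eq:all-criterion}. By the criterion, this is the least $\ell=sb$ with $P_s(b)\ge n$. If $P_s(b)\ge n$, pick $n$ pairwise coprime divisors $a_1,\ldots,a_n$ of $b$, each at least $s$. Pairwise coprime divisors of $b$ have a product that divides $b$, so $b\ge\prod a_i\ge Q_s(n)$. Conversely, $b=Q_s(n)$ itself satisfies $P_s(b)\ge n$, using a minimizing tuple in \eqref{eq:Q}. Hence the minimum is $sQ_s(n)$.

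The substantive work is already contained in Lemma~\ref{lem:field} and Theorem~\ref{thm:count}. The only point I expect to need care is the implication ``pairwise coprime divisors $\Rightarrow$ product divides $b$''; it is what turns the divisor-family condition into the numerical minimum.
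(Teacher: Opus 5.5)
Your proposal is correct and matches the paper's proof. Necessity comes from Theorem~\ref{thm:count}, and you usefully spell out that $n\ge5>1$ turns $n\le\max\{1,P_s(\ell/s)\}$ into $P_s(\ell/s)\ge n$. Sufficiency comes from Corollary~\ref{cor:upper}, and the minimum follows because $n$ pairwise coprime divisors of $b$, each at least $s$, have product dividing $b$ and at least $Q_s(n)$, with equality attained at $b=Q_s(n)$.
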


\begin{proof}
Necessity in \eqref{eq:all-criterion} is Theorem~\ref{thm:count},
and sufficiency is Corollary~\ref{cor:upper}.
If the conditions in \eqref{eq:all-criterion} hold, there are
$n$ pairwise coprime divisors
$a_1,\ldots,a_n\ge s$ of $b=\ell/s$ satisfying
\[
 Q_s(n)\le\prod_{i=1}^n a_i,\qquad \prod_{i=1}^n a_i\mid b.
\]
Hence $\ell\ge sQ_s(n)$. Corollary~\ref{cor:upper} attains equality.
\end{proof}

\begin{proposition}\label{prop:prime-packing}
Write $b=\prod_{j=1}^u\rho_j^{e_j}$ with distinct primes
$\rho_j$. Then $P_s(b)$ is the maximum number of pairwise
disjoint nonempty index sets $I_1,\ldots,I_m\subseteq\{1,\ldots,u\}$
such that
\begin{equation}\label{eq:prime-packing}
 \prod_{j\in I_a}\rho_j^{e_j}\ge s\quad(1\le a\le m).
\end{equation}
Consequently $P_s(b)\le\omega(b)$ and $P_2(b)=\omega(b)$.
For $s=2$, \eqref{eq:all-criterion} becomes
\[
 2\mid\ell\quad\text{and}\quad\omega(\ell/2)\ge n,
\]
and $L_q(n,n-2,n-1)=2\Pi_n$.
\end{proposition}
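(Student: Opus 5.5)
The plan is to turn an admissible divisor family into a family of pairwise disjoint prime sets, and conversely, by the standard correspondence between pairwise coprime divisors and disjoint supports.

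\emph{Upper bound.} Let $a_1,\ldots,a_m$ be pairwise coprime divisors of $b$, each at least $s$, and put
\[
 I_a=\{j:\rho_j\mid a_a\}.
\]
Coprimality makes the $I_a$ pairwise disjoint. Since $s\ge2$, each $a_a>1$, so each $I_a$ is nonempty. Because $a_a\mid b$, we have $a_a\le\prod_{j\in I_a}\rho_j^{e_j}$, and hence \eqref{eq:prime-packing} holds. Thus $P_s(b)$ is at most the packing maximum.

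\emph{Lower bound.} Given disjoint nonempty sets satisfying \eqref{eq:prime-packing}, set
\[
 a_a=\prod_{j\in I_a}\rho_j^{e_j}.
\]
These are divisors of $b$, each at least $s$, and they are pairwise coprime because their prime supports are disjoint. So the packing maximum is at most $P_s(b)$, and the two quantities are equal. The empty family covers the case $m=0$.

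\emph{The consequences.} Disjoint nonempty subsets of $\{1,\ldots,u\}$ number at most $u=\omega(b)$, which gives $P_s(b)\le\omega(b)$. For $s=2$, the singletons $\{j\}$ satisfy $\rho_j^{e_j}\ge2$, which gives $P_2(b)\ge\omega(b)$ and hence equality. Substituting $P_2=\omega$ into \eqref{eq:all-criterion} gives the displayed criterion.

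\emph{The value of $L_q$.} By \eqref{eq:all-minimum} with $s=2$, we have $L_q(n,n-2,n-1)=2Q_2(n)$, so it remains to show $Q_2(n)=\Pi_n$. The first $n$ primes form an admissible tuple, so $Q_2(n)\le\Pi_n$. For the reverse inequality, take pairwise coprime $a_i\ge2$ and choose a prime $q_i\mid a_i$. The $q_i$ are distinct, so after sorting we have $q_{(i)}\ge p_i$. Therefore
\[
 \prod a_i\ge\prod q_i\ge\Pi_n.
\]
Alternatively, the criterion itself shows that the minimal $\ell$ needs $\omega(\ell/2)\ge n$, and the least integer with $n$ distinct prime factors is $\Pi_n$. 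All steps are routine; the only point needing a moment's care is the sorting argument in this last step.
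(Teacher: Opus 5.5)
Your proposal is correct and follows essentially the same route as the paper. Coprime divisors give disjoint prime-support index sets, and products over disjoint index sets give coprime divisors; singletons settle $s=2$, and choosing one prime from each factor yields $Q_2(n)\ge\Pi_n$. Your explicit nonemptiness check via $s\ge2$ and the sorting argument $q_{(i)}\ge p_i$ supply details that the paper leaves implicit.
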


\begin{proof}
Given pairwise coprime divisors $a_1,\ldots,a_m\ge s$ of $b$,
put $I_h=\{j:\rho_j\mid a_h\}$ for $1\le h\le m$.
Then
\[
 I_h\cap I_{h'}=\varnothing\quad(h\ne h'),\qquad
 a_h\mid\prod_{j\in I_h}\rho_j^{e_j},\qquad
 \prod_{j\in I_h}\rho_j^{e_j}\ge a_h\ge s.
\]
Conversely, the products in \eqref{eq:prime-packing} are
pairwise coprime divisors of $b$. There can be at most $u$
disjoint nonempty subsets of $\{1,\ldots,u\}$; when $s=2$,
the $u$ sets $\{1\},\ldots,\{u\}$ all satisfy
\eqref{eq:prime-packing}.
Finally, choosing one prime divisor from each factor in
\eqref{eq:Q} gives $Q_2(n)\ge\Pi_n$, while the first $n$
primes attain equality.
\end{proof}

\begin{remark}\label{rem:singleton}
The term $\max\{1,P_s(b)\}$ in \eqref{eq:count} is necessary.
If $q\ge n-1$, choose $B\subset K$ of size $n-1$ and
$\alpha\in\F_{q^s}$ of degree $s$ over $K$.
For $A=B\cup\{\alpha\}$, the space $\mathcal P=K[X]_{<s}$
repairs $\alpha$ with one symbol from each helper, although
$b=1$ and $P_s(1)=0$.

If $s\mid\ell$, Theorem~\ref{thm:count} also implies that
at least $n-\min\{n,\max\{1,P_s(\ell/s)\}\}$ coordinates $i$ satisfy
\[
 \beta_A(i;A\setminus\{i\})\ge (n-1)\ell/s+1,
\]
because the repair cost of a $K$-linear scheme is an integer.
\end{remark}

\subsection{Arbitrary Helper Sets}

We return to $3\le k<d<n$ and assume that the RS code with
evaluation set $A$ admits $K$-linear repair at the cut-set
bound for every target and every $d$-element helper set,
as required in \eqref{eq:L}.
For every $H\subset A$ with $|H|=d<n$, choose a target
$c\in A\setminus H$. Lemma~\ref{lem:field} gives
$T(H)\subseteq\F_{q^b}$, where $b=\ell/s$.
Since $d\ge4$, each four-point cross ratio of $A$ occurs in
some such $H$. Hence, with $t=[T(A):K]$,
\begin{equation}\label{eq:intrinsic}
 T(A)\subseteq\F_{q^b},\qquad t\mid b,\qquad \ell\ge st.
\end{equation}
It remains to obtain a lower bound on $t$.
We first derive field-degree identities for general $n>d$.
The restriction $n\le2d-2$ enters in Lemma~\ref{lem:fixed-target},
where it ensures that two $d$-element helper sets for the same
target intersect in at least three points.

For the remainder of this section, set
\[
 L=T(A)=\F_{q^t}.
\]
Choose distinct $u,v,w\in A$, put $\theta=\phi_{u,v,w}$, and set
$A'=\theta(A)$. Lemma~\ref{lem:geometry} gives
\[
 A'\subseteq\Pone(L),\qquad
 T(\theta(C))=T(C)\quad(C\subseteq A,\ |C|\ge3).
\]
For the geometric argument, write $A$ for $A'$ and retain
the point labels. Cross-ratio invariance preserves
\eqref{eq:local} under this change of coordinates; the RS code
continues to use the original evaluation set.

To compare the relative-degree constraints from different helper
sets, we track each prime exponent separately. The sublines below
translate an increase in such an exponent into a containment
condition: the helpers lie on the subline, while the target lies
outside it.

For each prime $\rho\mid t$, put $a_\rho=v_\rho(t)$ and define
\begin{equation}\label{eq:levels}
 E_{\rho,j}=\F_{q^{t/\rho^{a_\rho-j+1}}},
 \qquad 1\le j\le a_\rho.
\end{equation}
For each fixed $\rho$, they form the chain
\[
 E_{\rho,1}\subseteq E_{\rho,2}\subseteq\cdots
 \subseteq E_{\rho,a_\rho}.
\]
For $C\subseteq A$, $|C|\ge3$, write $t_C=[T(C):K]$.
Since $t_C\mid t$ and
$v_\rho([E_{\rho,j}:K])=j-1$, the finite-field containment
criterion in \eqref{eq:fields} and Lemma~\ref{lem:geometry} give
\begin{equation}\label{eq:threshold}
 v_\rho(t_C)<j
 \quad\Longleftrightarrow\quad T(C)\subseteq E_{\rho,j}
 \quad\Longleftrightarrow\quad
 C\subseteq D\text{ for an }E_{\rho,j}\text{-subline }D.
\end{equation}
For a helper set $H$ with $T(H)\subseteq E_{\rho,j}$,
choose any three distinct points $u_0,v_0,w_0\in H$ and define
\[
 D_{\rho,j}(H)
 =\phi_{u_0,v_0,w_0}^{-1}(\Pone(E_{\rho,j})).
\]
Equation~\eqref{eq:threshold} shows that this subline contains
$H$. Since three distinct points determine a unique
$E_{\rho,j}$-subline, the definition is independent of the
chosen three points. If $T(H)\not\subseteq E_{\rho,j}$, no
$E_{\rho,j}$-subline contains $H$, and $D_{\rho,j}(H)$ is
undefined.
For $i\notin H$, uniqueness implies
\begin{equation}\label{eq:witness}
 \begin{split}
 v_\rho(t_H)<j\le v_\rho(t_{H\cup\{i\}})
 \quad\Longleftrightarrow\quad
 D_{\rho,j}(H)\text{ exists and }i\notin D_{\rho,j}(H).
 \end{split}
\end{equation}
Indeed, an $E_{\rho,j}$-subline contains $H\cup\{i\}$
if and only if the unique subline containing $H$ also contains
$i$. Thus the local relative degree has the exact factorization
\begin{equation}\label{eq:degree-level-product}
 \begin{aligned}
 [T(H\cup\{i\}):T(H)]
 &=\prod_{\rho\mid t}\rho^{|\mathcal J_\rho(H,i)|},\\
 \mathcal J_\rho(H,i)
 &=\{1\le j\le a_\rho:
       D_{\rho,j}(H)\text{ exists},\ i\notin D_{\rho,j}(H)\}.
 \end{aligned}
\end{equation}
In particular, the exponent of $\rho$ in this relative degree
is the number of indices in $\mathcal J_\rho(H,i)$.

We next compare $T(A)$ with each field $T(A\setminus\{i\})$
obtained by deleting one point.
Define
\begin{equation}\label{eq:S-N}
 \begin{gathered}
 h_i=t_{A\setminus\{i\}},\qquad S_i=t/h_i,\qquad
 N=\frac{t}{\prod_{i\in A}S_i},\\
 Z=\{i\in A:S_i<s\},\qquad z=|Z|.
 \end{gathered}
\end{equation}
Here $S_i=[T(A):T(A\setminus\{i\})]$ records the change in
field degree when $i$ is deleted, and $Z$ collects the points
for which this ratio is below $s$. We first show that the $S_i$
are pairwise coprime and that $N$ is an integer. If $Z$ is empty,
these facts already give $t\ge Q_s(n)$. Otherwise, under
$n\le2d-2$, the helper-set constraints will give lower bounds
on $N$ through the prime-divisor counts below.

By \eqref{eq:delete} and the compositum formula in
\eqref{eq:fields}, for distinct $i,i'\in A$,
\[
 t=\lcm(h_i,h_{i'}),\qquad
 a_\rho=\max\{v_\rho(h_i),v_\rho(h_{i'})\}
 \quad(\rho\mid t).
\]
Since $v_\rho(S_i)=a_\rho-v_\rho(h_i)$, this implies
\[
 \min\{v_\rho(S_i),v_\rho(S_{i'})\}=0,
 \qquad \gcd(S_i,S_{i'})=1.
\]
Thus each prime divisor of $t$ divides at most one $S_i$,
and its exponent in that $S_i$ is at most its exponent in $t$.
It follows that $\prod_{i\in A}S_i\mid t$ and $N$ is a
positive integer. The $n-z$ integers $S_i$ with $i\notin Z$
are pairwise coprime and at least $s$, so their product is
at least $Q_s(n-z)$. Hence
\begin{equation}\label{eq:factor-bound}
 t=\left(\prod_{i\in A}S_i\right)N
 \ge Q_s(n-z)N.
\end{equation}
Define the following sets of pairs $(\rho,j)$:
\begin{equation}\label{eq:level-sets}
 \begin{aligned}
 \mathcal L&=\{(\rho,j):\rho\mid t,\ 1\le j\le a_\rho\},\\
 \mathcal S_i&=\{(\rho,j)\in\mathcal L:v_\rho(h_i)<j\},\\
 \mathcal N&=\mathcal L\setminus\bigcup_{i\in A}\mathcal S_i.
 \end{aligned}
\end{equation}
If $(\rho,j)\in\mathcal S_i\cap\mathcal S_{i'}$ with
$i\ne i'$, then
$\max\{v_\rho(h_i),v_\rho(h_{i'})\}<j\le a_\rho$,
contrary to $t=\lcm(h_i,h_{i'})$. Thus the sets
$\mathcal S_i$ are pairwise disjoint, and
\[
 S_i=\prod_{(\rho,j)\in\mathcal S_i}\rho,\qquad
 N=\prod_{(\rho,j)\in\mathcal N}\rho.
\]
By \eqref{eq:threshold}, $(\rho,j)$ belongs to $\mathcal S_i$
exactly when an $E_{\rho,j}$-subline contains $A\setminus\{i\}$
and excludes $i$. Such a subline cannot contain all of $A$,
because $T(A)=L\not\subseteq E_{\rho,j}$.

We will also use the implication
\begin{equation}\label{eq:subline-N}
 \left.
 \begin{array}{c}
 D\text{ is an }E_{\rho,j}\text{-subline},\\
 |A\cap D|\ge d,\quad |A\setminus D|\ge2
 \end{array}
 \right\}
 \quad\Longrightarrow\quad (\rho,j)\in\mathcal N.
\end{equation}
Indeed, if $(\rho,j)\in\mathcal S_i$ for some $i\in A$,
there is an $E_{\rho,j}$-subline $D'$ with
$A\setminus D'=\{i\}$. Then
\[
 |A\cap D\cap D'|\ge d-1\ge3,
\]
so Lemma~\ref{lem:subline-intersection} would give $D=D'$,
contrary to $|A\setminus D|\ge2$ and $|A\setminus D'|=1$.

\begin{lemma}\label{lem:fixed-target}
Suppose $n\le2d-2$. Fix $i\in A$, a prime $\rho\mid t$,
and indices $1\le j\le j'\le a_\rho$.
If $D$ and $D'$ are $E_{\rho,j}$- and
$E_{\rho,j'}$-sublines, respectively, such that
\[
 |A\cap D|\ge d,\qquad |A\cap D'|\ge d,
 \qquad i\notin D\cup D',
\]
then $D\subseteq D'$.
\end{lemma}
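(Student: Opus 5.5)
The plan is to reduce the statement to Lemma~\ref{lem:subline-intersection} by finding three common points of $D$ and $D'$. Since $E_{\rho,j}\subseteq E_{\rho,j'}$ for $j\le j'$, that lemma gives $D\subseteq D'$ as soon as $|D\cap D'|\ge3$. So the whole argument is a counting step inside $A$.

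First, put $C=A\cap D$ and $C'=A\cap D'$. By hypothesis $|C|,|C'|\ge d$, and since $i\notin D\cup D'$, both $C$ and $C'$ lie in $A\setminus\{i\}$, a set of size $n-1$. Inclusion--exclusion inside $A\setminus\{i\}$ then gives
\[
 |C\cap C'|\ge |C|+|C'|-(n-1)\ge 2d-(n-1)\ge 3,
\]
where the last inequality is exactly the hypothesis $n\le 2d-2$. This is the point where excluding the fixed target $i$ from both sublines matters. Without it, the ambient set would be all of $A$ and we would only get $2d-n\ge2$ common points, which is not enough.

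Next, the points of $C\cap C'$ are points of $A\subseteq\Pone(L)$ lying on both $D$ and $D'$. After the change of coordinates $\theta$ fixed above, all sublines are taken inside $\Pone(L)$, and so are $D,D'$ in the statement. Hence $D$ and $D'$ share at least three distinct points. Applying Lemma~\ref{lem:subline-intersection} with $E=E_{\rho,j}\subseteq E'=E_{\rho,j'}$ yields $D\subseteq D'$.

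The main obstacle is minor: we must confirm that the hypotheses of Lemma~\ref{lem:subline-intersection} hold in the setting where the sublines live. Both $D$ and $D'$ must be sublines of the same $\Pone(L)$, and the chain containment $E_{\rho,j}\subseteq E_{\rho,j'}$ must hold. The chain containment follows from \eqref{eq:levels} and the divisibility criterion in \eqref{eq:fields}, because $t/\rho^{a_\rho-j+1}$ divides $t/\rho^{a_\rho-j'+1}$ when $j\le j'$. Apart from this check, the proof is the single intersection count above.
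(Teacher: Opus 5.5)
Your proposal is correct and matches the paper's proof: both count $|A\cap D\cap D'|\ge |A\cap D|+|A\cap D'|-(n-1)\ge 2d-n+1\ge 3$ inside $A\setminus\{i\}$. Both then apply Lemma~\ref{lem:subline-intersection} with $E_{\rho,j}\subseteq E_{\rho,j'}$. Your check of that chain containment via the divisibility $t/\rho^{a_\rho-j+1}\mid t/\rho^{a_\rho-j'+1}$ is right and is the only detail the paper leaves implicit.
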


\begin{proof}
Both $A\cap D$ and $A\cap D'$ are subsets of $A\setminus\{i\}$.
Consequently,
\[
 |A\cap D\cap D'|
 \ge |A\cap D|+|A\cap D'|-(n-1)
 \ge2d-n+1\ge3.
\]
Since $E_{\rho,j}\subseteq E_{\rho,j'}$,
Lemma~\ref{lem:subline-intersection} gives $D\subseteq D'$.
\end{proof}

\begin{lemma}
\label{lem:block}
Suppose $n\le2d-2$ and $z>0$. Then
\begin{equation}\label{eq:block-bound}
 \omega(N)\ge\max\{d+1,\ \min\{z,5\}(d-3)\}.
\end{equation}
\end{lemma}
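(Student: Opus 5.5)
The plan is to turn the relative-degree condition of Lemma~\ref{lem:field} at a target $i\in Z$ into a covering condition on helper sets, and then count the primes of $N$ needed to meet it.

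\emph{Step 1: levels that witness a helper set.} Fix $i\in Z$. Note first that $d\le n-2$: if $d=n-1$, the only helper set for $i$ is $A\setminus\{i\}$, and Lemma~\ref{lem:field} would give $S_i\ge s$. For every $d$-set $H\subseteq A\setminus\{i\}$, Lemma~\ref{lem:field} gives $[T(H\cup\{i\}):T(H)]\ge s$. By \eqref{eq:degree-level-product} this degree is $\prod_\rho\rho^{|\mathcal J_\rho(H,i)|}$. Take a level $(\rho,j)\in\mathcal J_\rho(H,i)$ and put $D=D_{\rho,j}(H)$. If $A\setminus D=\{i\}$, then $D$ contains $A\setminus\{i\}$ and excludes $i$, so $(\rho,j)\in\mathcal S_i$ by \eqref{eq:threshold}. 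Otherwise $|A\setminus D|\ge2$ and $|A\cap D|\ge d$, so $(\rho,j)\in\mathcal N$ by \eqref{eq:subline-N}. The levels in $\mathcal S_i$ contribute exactly $S_i<s$. Hence every such $H$ has a \emph{witness}: a level $(\rho,j)\in\mathcal N\cap\mathcal J_\rho(H,i)$, whose subline $D\supseteq H$ misses $i$ and at least one further point of $A$.

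\emph{Step 2: the bound $d+1$.} Fix $i$ and a prime $\rho\mid N$ that witnesses some helper set. By Lemma~\ref{lem:fixed-target}, which uses $n\le2d-2$, all witnessing sublines for $i$ with prime $\rho$ are nested. Let $D^{(i)}_\rho$ be the largest one, and set $M_\rho=A\setminus D^{(i)}_\rho$. Then $i\in M_\rho$ and $M_\rho\setminus\{i\}\neq\varnothing$. Every $d$-set $H\subseteq A\setminus\{i\}$ lies in some $D^{(i)}_\rho$, so $H$ is disjoint from $M_\rho\setminus\{i\}$ for some $\rho$.

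Suppose only $r\le d$ primes occur. Pick one point of each $M_\rho\setminus\{i\}$; this gives at most $r\le d$ points of $A\setminus\{i\}$. Enlarge them to a $d$-set $H\subseteq A\setminus\{i\}$, which is possible since $n-1\ge d$. This $H$ meets every $M_\rho\setminus\{i\}$, a contradiction. So at least $d+1$ distinct primes divide $N$.

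\emph{Step 3: the bound $\min\{z,5\}(d-3)$.} Take distinct $i_1,\dots,i_m\in Z$ with $m=\min\{z,5\}$. The aim is to show that each $i_a$ needs at least $d-3$ primes of $N$ that serve no other $i_b$. The route is to restrict the covering argument of Step~2 for $i_a$ to helper sets that contain every $i_b$ with $b\ne a$. Such sets exist because $n\le2d-2$ leaves enough room. Suppose a prime $\rho$ serves both $i_a$ and $i_b$. Then its sublines for the two targets each contain at least $d$ points of $A$. Comparing them through Lemma~\ref{lem:subline-intersection} should force them to coincide or nest, which in turn forces $M_\rho\supseteq\{i_a,i_b\}$. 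A prime of that kind cannot witness a helper set containing $i_b$.

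With this in hand, the hitting argument for $i_a$ runs inside $A\setminus\{i_a\}$ with the other $m-1\le4$ targets forced into $H$. It therefore costs at most about $3$ free helper positions. This should yield at least $d-3$ primes private to $i_a$, and summing over $a$ gives the bound.

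\emph{Main obstacle.} The hard part is Step~3. The two sublines attached to a shared prime have intersection at least $2d-n\ge2$, not $3$, so Lemma~\ref{lem:subline-intersection} does not apply directly. I would first try to pin down the exceptional configuration: four-point intersections with pairwise disjoint two-point complements. For $(n,d)=(6,4)$ this is exactly the configuration ruled out by Lemma~\ref{lem:six-geometry}. That degenerate case is also where I expect the cap of $5$ targets to come from. If the direct nesting fails, I would instead bound by a constant the number of targets sharing a single prime, and absorb that constant into the $-3$ loss.
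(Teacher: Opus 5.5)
Your Steps 1 and 2 are correct and are essentially the paper's argument. They are the case $B=\varnothing$ of its count $|\mathcal R_i(B)|\ge d-|B|+1$. A witness level lies in $\mathcal N$, and Lemma~\ref{lem:fixed-target} makes the witnessing sublines for a fixed pair $(i,\rho)$ a chain. A hitting set of at most $d$ points would then give a helper set whose relative degree divides $S_i<s$.

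The gap is in Step 3: you never establish that the private prime sets are disjoint. You try to obtain three common points of the two sublines attached to a shared prime from their sizes. As you note, sublines missing different targets $i_a\neq i_b$ are only guaranteed $2d-n\ge2$ common points of $A$, and this fails at $n=2d-2$. The three common points have to come from the points you force into the helper sets. Your restriction forces only $\{i_b:b\ne a\}$, so two such constraint sets share $m-2$ points, which reaches three only when $m=5$.

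The missing idea is padding. Extend $I=\{i_1,\dots,i_m\}$ to a five-point set $G\subseteq A$, which is possible since $n\ge5$. For the target $i_a$, run your Step 2 only over helper sets containing $B_a=G\setminus\{i_a\}$. Forcing these four points leaves $d-4$ free positions, so the hitting argument gives at least $d-3$ primes of $N$ whose witnessing sublines contain $B_a$ and miss $i_a$.

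Suppose $\rho$ occurred for both $i_a$ and $i_b$. The two sublines are over nested fields $E_{\rho,j}\subseteq E_{\rho,j'}$ and share the three points $G\setminus\{i_a,i_b\}$. Lemma~\ref{lem:subline-intersection} then forces one to lie inside the other. That is impossible, because one contains $i_b$ but not $i_a$ and the other contains $i_a$ but not $i_b$.

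So the cap of five and the loss of three both come from $|G|=5$, not from the configuration of Lemma~\ref{lem:six-geometry}. That lemma plays no role here; it is used only for $(n,k,d)=(6,3,4)$ in Lemma~\ref{lem:six}. Your fallback of bounding how many targets share a prime is unnecessary, and as written it is not carried out.
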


\begin{proof}
Fix $i\in Z$ and $B\subseteq A\setminus\{i\}$ of size
$f\le d$. Define the following set of prime divisors of $t$:
\[
 \mathcal R_i(B)=\left\{\rho\mid t:
 \begin{array}{l}
 \exists\,1\le j\le a_\rho,\ \exists\,E_{\rho,j}\text{-subline }D:\\
 |A\cap D|\ge d,\quad |A\setminus D|\ge2,\\
 B\subseteq D,\quad i\notin D
 \end{array}\right\}.
\]
For $\rho\in\mathcal R_i(B)$, at least one pair $(\rho,j)$
satisfies the hypotheses of \eqref{eq:subline-N}. Hence
$(\rho,j)\in\mathcal N$ and
$\rho\mid\prod_{(\rho',j')\in\mathcal N}\rho'=N$.
For each $\rho\in\mathcal R_i(B)$, let $j_\rho$ be the
largest index $j$ for which a subline in this definition
exists, and choose such a subline $D_\rho$ over
$E_{\rho,j_\rho}$. For any $E_{\rho,j}$-subline $D$ satisfying
the conditions defining $\mathcal R_i(B)$, maximality gives
$j\le j_\rho$, and Lemma~\ref{lem:fixed-target} with
$j'=j_\rho$ gives $D\subseteq D_\rho$.
Choose
\[
 x_\rho\in A\setminus(D_\rho\cup\{i\}).
\]
Such a point exists because $i\notin D_\rho$ and
$|A\setminus D_\rho|\ge2$.
If $|\mathcal R_i(B)|\le d-f$, then
$|B\cup\{x_\rho:\rho\in\mathcal R_i(B)\}|\le d$.
Choose a $d$-element helper set $H$ containing this union:
\[
 B\cup\{x_\rho:\rho\in\mathcal R_i(B)\}
 \subseteq H\subseteq A\setminus\{i\},\qquad |H|=d.
\]
This is possible since the union excludes $i$ and $d\le n-1$.
Suppose that
$j\in\mathcal J_\rho(H,i)$ and
$D=D_{\rho,j}(H)$ satisfies $|A\setminus D|\ge2$. Then
\[
 B\subseteq H\subseteq D,\qquad i\notin D.
\]
Thus $\rho\in\mathcal R_i(B)$ and $D\subseteq D_\rho$,
contradicting $x_\rho\in H\setminus D_\rho$.
Consequently, for every $\rho\mid t$ and
$j\in\mathcal J_\rho(H,i)$,
\[
 A\setminus D_{\rho,j}(H)=\{i\},\qquad
 (\rho,j)\in\mathcal S_i.
\]
The product formulas \eqref{eq:degree-level-product} and
\eqref{eq:level-sets} therefore give
\[
 [T(H\cup\{i\}):T(H)]\mid S_i<s,
\]
contrary to Lemma~\ref{lem:field}. We have proved
\begin{equation}\label{eq:prime-core-count}
 |\mathcal R_i(B)|\ge d-f+1
\end{equation}
for every such $i$ and $B$. Taking $B=\varnothing$ gives
$\omega(N)\ge d+1$.

For the other bound, let $h=\min\{z,5\}$ and choose
$I\subseteq Z$ with $|I|=h$. Since $n\ge5$, extend $I$
to a five-point set $G\subseteq A$. For each $i\in I$,
put $B_i=G\setminus\{i\}$. Equation~\eqref{eq:prime-core-count}
gives $|\mathcal R_i(B_i)|\ge d-3$.
If a prime $\rho$ belonged to both $\mathcal R_i(B_i)$ and
$\mathcal R_{i'}(B_{i'})$ for distinct $i,i'\in I$,
there would exist indices $j_i,j_{i'}$ and sublines
$D_i,D_{i'}$ over $E_{\rho,j_i},E_{\rho,j_{i'}}$, respectively,
such that
\[
 G\setminus\{i\}\subseteq D_i,\quad i\notin D_i,
 \qquad
 G\setminus\{i'\}\subseteq D_{i'},\quad i'\notin D_{i'}.
\]
These sublines contain the same three points in
$G\setminus\{i,i'\}$. The two fields are nested by
\eqref{eq:levels}.
Lemma~\ref{lem:subline-intersection} therefore gives
$D_i\subseteq D_{i'}$ or $D_{i'}\subseteq D_i$. However,
\begin{equation}\label{eq:incompatible-inclusions}
 i'\in D_i,\quad i'\notin D_{i'},
 \qquad i\in D_{i'},\quad i\notin D_i,
\end{equation}
which excludes either inclusion. The prime sets are disjoint,
and \eqref{eq:prime-core-count} yields
\[
 \omega(N)\ge\sum_{i\in I}|\mathcal R_i(B_i)|
 \ge h(d-3).
\]
\end{proof}

\subsection{Completion of the Lower Bound}

This subsection completes the lower bound
$\ell\ge sQ_s(n)$ by converting the prime-divisor count into
the inequality $t\ge Q_s(n)$.

\begin{lemma}\label{lem:arithmetic}
Let $c_s=\pi(s-1)$. For $0\le z\le n$,
\begin{equation}\label{eq:Qgrowth}
 Q_s(n)\le Q_s(n-z)p_{n+c_s}^{z},\qquad
 Q_s(n)\le\frac{\Pi_{n+c_s}}{\Pi_{c_s}}.
\end{equation}
For $d\ge5$ and $1\le z\le4$,
\begin{equation}\label{eq:prime-compare}
 \Pi_{\max\{d+1,z(d-3)\}}>p_{3d-6}^{z}.
\end{equation}
\end{lemma}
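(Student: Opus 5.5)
For the first inequality in \eqref{eq:Qgrowth}, I will start from a minimizing tuple $a_1,\ldots,a_{n-z}$ for $Q_s(n-z)$ and extend it by $z$ further entries. Since the $a_i$ are pairwise coprime and each is at least $s\ge2$, each $a_i$ has its own prime divisor, so the tuple uses at most $n-z$ distinct primes in total counting disjointly. I need to bound how many primes among $p_{c_s+1},\ldots,p_{n+c_s}$ are still free. The primes $\ge s$ in this range number exactly $n$, since $p_{c_s+1}$ is the first prime $\ge s$. At most $n-z$ of them divide the product $\prod a_i$: if some $a_i$ had two prime divisors $\ge s$ we could replace $a_i$ by one of them and decrease the product, contradicting minimality. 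So every $a_i$ contains at most one prime $\ge s$. Hence at least $z$ primes $\ge s$ and $\le p_{n+c_s}$ are coprime to all $a_i$. Appending them gives an admissible $n$-tuple with product at most $Q_s(n-z)p_{n+c_s}^z$. The second inequality is the case $z=n$ taken directly: the primes $p_{c_s+1},\ldots,p_{n+c_s}$ are pairwise coprime and each is at least $s$, so $Q_s(n)\le\Pi_{n+c_s}/\Pi_{c_s}$. No minimality argument is needed there.

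For \eqref{eq:prime-compare}, put $m=\max\{d+1,z(d-3)\}$. I will split the primorial $\Pi_m$ into $z$ disjoint blocks of consecutive primes and compare each block with $p_{3d-6}$.

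For $z=1$, I only need $\Pi_{d+1}>p_{3d-6}$. This follows from Bertrand-type growth. Since $p_j\le 2^j$, we have $p_{3d-6}\le 2^{3d-6}$. Also $\Pi_{d+1}\ge 2\cdot3\cdot5\cdots$, and $\log\Pi_{d+1}$ grows like $d\log d$, so it eventually beats $(3d-6)\log 2$. I will check $d=5$ directly: $\Pi_6=30030>p_9=23$. The inequality is then monotone in $d$, because raising $d$ by one multiplies the left side by $p_{d+2}\ge 7$ but the right side by at most a factor of $p_{3d-3}/p_{3d-6}<8$. That last ratio needs care, so instead I will use the crude bound $p_{3d-6}\le 2^{3d-6}$ against $\Pi_{d+1}\ge 6\cdot 5\cdot 7^{d-2}$.

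For $z\ge2$, I will split the first $z(d-3)$ primes into $z$ blocks of $d-3$ primes each. Each block contains $d-3\ge2$ primes, and the last block contains the largest ones. I need every block's product to exceed $p_{3d-6}$. The worst block is the first one, $\Pi_{d-3}$, so it suffices to show $\Pi_{d-3}>p_{3d-6}$ for $d\ge5$. At $d=5$ this reads $\Pi_2=6$ against $p_9=23$, which fails.

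So for small $d$ I will instead compare the whole product at once. Using $m\ge d+1$ as well, the blocks can be taken unequal, or I can check a finite table directly. For $d=5$: with $z=2$ the claim is $\Pi_6=30030>23^2=529$; with $z=3$ it is $\Pi_6>23^3=12167$; with $z=4$ it is $\Pi_8=9699690>23^4=279841$. All three hold. For $d=6,7$ I will also verify by table. For $d\ge8$ the block argument works, since $\Pi_{d-3}\ge\Pi_5=2310$ while $p_{3d-6}=p_{18}=61$. Growth then persists because $\Pi_{d-3}$ gains a factor of at least $11$ per step, while $p_{3d-6}$ grows by a factor of less than $2$ per step, by Bertrand's postulate applied to three consecutive prime indices.

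I expect the main obstacle to be keeping this small-$d$ case analysis clean, together with making the minimality claim in the first inequality rigorous, namely that each $a_i$ carries at most one prime $\ge s$.
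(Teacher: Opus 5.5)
\textbf{Gap: the case $z=1$ of \eqref{eq:prime-compare}.} The bound you finally choose cannot work for large $d$. From $\Pi_{d+1}\ge 30\cdot 7^{d-2}$ and $p_{3d-6}\le 2^{3d-6}=8^{d-2}$ you get $\Pi_{d+1}>p_{3d-6}$ only while $(8/7)^{d-2}<30$, that is, only for $d\le 27$. For $d\ge 28$ the two bounds compare the wrong way. The remark that $\log\Pi_{d+1}$ grows like $d\log d$ is not quantified, so it does not close the case either.

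The monotonicity argument you discarded fails only because of the weak estimate $p_{d+2}\ge 7$. In fact $p_{d+2}\ge p_7=17>8$ for $d\ge5$, and $p_{j+3}<8p_j$ by Bertrand. With $\Pi_6>p_9$ as the base case, this completes the induction.

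Simpler still, and this is what the paper does, you need not treat $z=1$ separately. Since $\max\{d+1,z(d-3)\}\ge z(d-3)$ for every $z\ge1$, your block estimate $\Pi_{z(d-3)}\ge\Pi_{d-3}^{z}>p_{3d-6}^{z}$ already covers $z=1$ whenever $\Pi_{d-3}>p_{3d-6}$. This holds for all $d\ge7$, starting from $\Pi_4=210>p_{15}=47$. Only $d=5,6$ then need a table; $d=7$ also needs one if you start the induction at $d=8$. In your induction step, Bertrand gives growth by a factor less than $8$ per increase of $d$, not less than $2$. The conclusion survives because the new factor on the left is at least $11$.

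\textbf{The inequalities \eqref{eq:Qgrowth}.} Your argument is correct and takes a slightly different route from the paper's. The paper shows that in a minimizing tuple the primes at least $s$ that occur are exactly $p_{c_s+1},\dots,p_{c_s+r}$, each forming a whole factor. It then iterates $Q_s(m+1)\le Q_s(m)p_{m+c_s+1}$. You need only the weaker fact that each factor carries at most one prime $\ge s$. A pigeonhole count among $p_{c_s+1},\dots,p_{n+c_s}$ then lets you append all $z$ new primes at once, which is a bit more economical.

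Your opening aside that a minimizing tuple "uses at most $n-z$ distinct primes in total" is false. For example, $Q_5(2)=30$ is attained by $(5,6)$, which involves three primes. Nothing in your argument depends on that sentence, so just delete it.
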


\begin{proof}
For $m\ge0$, choose a minimizing $m$-tuple for $Q_s(m)$,
using the empty tuple when $m=0$. A factor divisible
by a prime $\rho\ge s$ must equal $\rho$: otherwise replacing
it by $\rho$ would preserve admissibility and strictly decrease
the product, contradicting minimality.
List the prime divisors at least $s$ that occur as
$\rho_1<\cdots<\rho_r$. Each is an entire factor of the tuple,
so $r\le m$. If a smaller prime at least $s$ were missing,
replacing an occurring prime by that unused prime would decrease
the product while preserving admissibility. Hence
$\rho_j=p_{c_s+j}$ for $1\le j\le r$.
Every other prime divisor is below $s$, so its index is at most
$c_s$. Thus all prime divisors of the tuple
are among the first $m+c_s$ primes. Adjoining $p_{m+c_s+1}$ gives
\[
 Q_s(m+1)\le Q_s(m)p_{m+c_s+1}.
\]
Iteration proves the first inequality in \eqref{eq:Qgrowth}.
The first $n$ primes at least $s$ give the second.

For $d\ge7$, we first prove
\[
 \Pi_{d-3}>p_{3d-6}.
\]
The initial case is $\Pi_4=210>p_{15}=47$.
Bertrand's postulate gives $p_{j+3}<8p_j$.
Increasing $d$ by one multiplies the left side by
$p_{d-2}\ge11>8$, which proves the claim by induction.
Since $p_{r+a}\ge p_a$ for all $r\ge0$ and $a\ge1$,
\[
 \Pi_{\max\{d+1,z(d-3)\}}
 \ge\Pi_{z(d-3)}
 =\prod_{r=0}^{z-1}\prod_{a=1}^{d-3}p_{r(d-3)+a}
 \ge\Pi_{d-3}^{z}>p_{3d-6}^{z}.
\]
For $d=5,6$, note that $p_9=23$, $p_{12}=37$, and
\[
 \Pi_6=30030>23^3,\qquad
 \Pi_7=510510>37^3,\qquad
 \Pi_8=9699690>37^4.
\]
The first two comparisons cover $1\le z\le3$.
For $z=4$, the index $\max\{d+1,4(d-3)\}$ is at least
eight, so the last comparison proves \eqref{eq:prime-compare}.
\end{proof}

The comparison in \eqref{eq:prime-compare} applies when $d\ge5$.
For $d=4$, the range $3\le k<d<n\le2d-2$ allows only
$(n,k,d)=(5,3,4)$ and $(6,3,4)$. Theorem~\ref{thm:all}
settles the first case; the following lemma treats the second.

\begin{lemma}\label{lem:six}
Let $A\subseteq\Pone(\F_{q^t})$ satisfy
$|A|=6$ and $T(A)=\F_{q^t}$. Suppose that
\[
 [T(H\cup\{i\}):T(H)]\ge2
 \quad(i\in A,\ H\subseteq A\setminus\{i\},\ |H|=4).
\]
Then $t\ge\Pi_6=Q_2(6)$.
\end{lemma}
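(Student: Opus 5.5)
The plan is to reuse the machinery of the subsection on arbitrary helper sets with $(n,k,d)=(6,3,4)$ and $s=2$. The only input from Lemma~\ref{lem:field} used in Lemma~\ref{lem:block} is the relative-degree inequality $[T(H\cup\{i\}):T(H)]\ge s$, and this is exactly the hypothesis here. Also $n=6=2d-2$. So the definitions \eqref{eq:S-N}, \eqref{eq:level-sets}, the factor bound \eqref{eq:factor-bound}, and Lemmas~\ref{lem:fixed-target} and~\ref{lem:block} all apply verbatim, with $L=T(A)=\F_{q^t}$.

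\emph{Cases $z\le 5$.} The primes of $\prod_i S_i$ and of $N$ are disjoint, because the sets $\mathcal S_i$ and $\mathcal N$ partition $\mathcal L$. If $z=0$, \eqref{eq:factor-bound} gives $t\ge Q_2(6)=\Pi_6$ directly. If $1\le z\le5$, there are $6-z$ integers $S_i\ge2$, pairwise coprime, each contributing at least one prime. By Lemma~\ref{lem:block} with $d=4$, $N$ contributes at least $\max\{5,\min\{z,5\}\}=5$ further primes. Hence $\omega(t)\ge(6-z)+5\ge6$, and so $t\ge\Pi_6$.

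\emph{Case $z=6$.} This is the remaining case, in which every $S_i=1$ and $t=N$. If $\omega(N)\ge6$ we are done, so assume, for a contradiction, that $\omega(t)=5$.

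\emph{Step 1: every prime serves every target.} By \eqref{eq:prime-core-count} with $B=\varnothing$, each of the six targets $i$ has $|\mathcal R_i(\varnothing)|\ge5$. Since there are only five primes, every prime $\rho\mid t$ lies in every $\mathcal R_i(\varnothing)$.

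\emph{Step 2: the complement graph.} No subline at any level contains exactly five points of $A$, since that would place a level in some $\mathcal S_i$. Also none contains all of $A$, since $T(A)=L$. Hence every subline in the definition of $\mathcal R_i$ meets $A$ in exactly four points and misses a pair $\{i,x\}$. For a fixed odd prime $\rho$, I collect these complement pairs into a graph $\Gamma_\rho$ on $A$; by Step~1 every vertex has an incident edge.

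\emph{Step 3: equalizing levels.} Given any family of such sublines over levels $E_{\rho,j}$, take the largest index $j^*$ among them. Replace each subline by the $E_{\rho,j^*}$-subline through three of its points. By Lemma~\ref{lem:subline-intersection} the new subline contains the old one. It still cannot contain five or six points of $A$, so it has the same complement pair.

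\emph{Step 4: contradiction.} Suppose $\Gamma_\rho$ contains a perfect matching. After Step~3, this gives three distinct $E_{\rho,j^*}$-sublines, each meeting $A$ in four points, with pairwise disjoint complements. Lemma~\ref{lem:six-geometry} is stated for $[L:E]$ an odd prime, but its proof uses only that $[L:E]$ is odd, to rule out $[E(x):E]=2$. Here $[L:E_{\rho,j^*}]$ is a power of the odd prime $\rho$, so the lemma still applies, and the contradiction shows $\omega(t)\ge6$.

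Since $t$ has five prime divisors, at least four are odd, so there is room to choose $\rho$ suitably. The main obstacle is therefore to show that some odd $\rho$ has a perfect matching in $\Gamma_\rho$, rather than, for instance, a star centred at a single vertex.

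I would attack this with the sharper counts \eqref{eq:prime-core-count} for $|B|=2$, which give $|\mathcal R_i(B)|\ge3$: for every target $i$ and every pair $B\subseteq A\setminus\{i\}$, at least three primes have a four-point subline through $B$ avoiding $i$. If $\Gamma_\rho$ is a star centred at $i_0$ (or otherwise lacks a perfect matching), then all $\rho$-sublines avoiding some target pass through $i_0$ or are otherwise restricted. Choosing $i$ and $B$ to hit that restriction, for example with $i_0\notin B$ and $B$ meeting every such complement pair, removes $\rho$ from $\mathcal R_i(B)$.

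With only five primes, at most one of them even, a pigeonhole count over pairs $(i,B)$ should then force at least one odd prime with a perfect matching. As a fallback, I would repeat the helper-set argument of Lemma~\ref{lem:block} directly. One would pick a $4$-element $H$ that meets the complement of each offending subline, and then check that $[T(H\cup\{i\}):T(H)]=1$, contradicting the hypothesis.
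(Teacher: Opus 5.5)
There are two genuine gaps.

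\textbf{Case $1\le z\le5$.} Your claim that $\prod_iS_i$ and $N$ have disjoint prime supports is false. The sets $\mathcal S_i$ and $\mathcal N$ partition the \emph{pairs} $(\rho,j)\in\mathcal L$, not the primes. Suppose $a_\rho\ge2$ and $1\le v_\rho(h_{i'})<a_\rho$. Then $v_\rho(h_i)=a_\rho$ for every $i\ne i'$, so $(\rho,1)\in\mathcal N$ while $(\rho,a_\rho)\in\mathcal S_{i'}$. Thus $\rho$ divides both $S_{i'}$ and $N$, and the count $\omega(t)\ge(6-z)+5$ is unsupported.

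For $z\le3$ you can fall back on \eqref{eq:factor-bound}: $t\ge Q_2(6-z)N\ge Q_2(3)\Pi_5=69300>\Pi_6$. For $z\in\{4,5\}$ this gives only $6\Pi_5$ or $2\Pi_5$, both below $\Pi_6$, so an extra argument is needed. Here is one way:
\begin{itemize}
\item If $\omega(t)=5$, every prime lies in every $\mathcal R_i(\varnothing)$ with $i\in Z$.
\item Take $\rho\mid S_{i'}$. A witness subline for $i\in Z$ shares three points of $A$ with the $E_{\rho,a_\rho}$-subline through $A\setminus\{i'\}$. Hence it lies inside that subline and misses exactly $\{i,i'\}$.
\item Two distinct $i\in Z$ then give sublines avoiding $i'$ with different four-point traces, which Lemma~\ref{lem:fixed-target} forbids.
\end{itemize}

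\textbf{Case $z=6$.} The step you call the main obstacle is never proved; the pigeonhole over $|\mathcal R_i(B)|\ge3$ and the fallback are only plans. The missing idea is already in your toolkit. By Lemma~\ref{lem:fixed-target} with target $i$ (this is where $n=2d-2$ enters), any two four-point sublines for the same $\rho$ that avoid $i$ are nested, possibly over different levels. By your Step~2 they therefore meet $A$ in the same four points. So all edges of $\Gamma_\rho$ at a vertex coincide, and $\Gamma_\rho$ has maximum degree one. Your Step~1 then makes it a perfect matching for \emph{every} prime $\rho\mid t$, so a star cannot occur. Steps~3--4 finish the case for any odd $\rho$; alternatively, pass directly to the top level $E_{\rho,a_\rho}$, where $[L:E_{\rho,a_\rho}]=\rho$.

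\textbf{Comparison with the paper.} The paper needs neither the split on $z$ nor $\omega(t)\ge6$. It counts the $30$ pairs $(i,H)$. Within each five-subset, at most one deletion ratio is divisible by a given prime, so each prime divides $[T(H\cup\{i\}):T(H)]$ for at most $6$ pairs, giving $\omega(t)\ge5$. Using Lemma~\ref{lem:six-geometry}, it shows that an odd prime with $v_\rho(t)=1$ covers at most $5$ pairs. Hence $\omega(t)=5$ forces $\rho^2\mid t$ for every odd $\rho$, and $t\ge2(3\cdot5\cdot7\cdot11)^2>\Pi_6$.
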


\begin{proof}
For $C\subseteq A$ with $|C|\ge3$, write $t_C=[T(C):K]$.
Define
\[
 \begin{aligned}
 \mathcal I
 &=\{(i,H):i\in A,\ H\subseteq A\setminus\{i\},\ |H|=4\},\\
 \delta_{i,H}
 &=[T(H\cup\{i\}):T(H)]=t_{H\cup\{i\}}/t_H,\\
 \mathcal I_\rho
 &=\{(i,H)\in\mathcal I:\rho\mid\delta_{i,H}\}
 \qquad(\rho\mid t).
 \end{aligned}
\]
Thus $|\mathcal I|=6\binom54=30$. For each five-element
subset $C\subset A$ and distinct $i,i'\in C$,
\eqref{eq:delete} and \eqref{eq:fields} give
\[
 \begin{gathered}
 t_C=\lcm(t_{C\setminus\{i\}},t_{C\setminus\{i'\}}),\\
 \gcd\!\left(\frac{t_C}{t_{C\setminus\{i\}}},
             \frac{t_C}{t_{C\setminus\{i'\}}}\right)=1.
 \end{gathered}
\]
For a fixed prime $\rho\mid t$, at most one of these five
ratios is divisible by $\rho$. Since $H=C\setminus\{i\}$
when $H\cup\{i\}=C$, we obtain
\[
 |\mathcal I_\rho|
 =\sum_{\substack{C\subset A\\|C|=5}}
   \left|\left\{i\in C:
       \rho\mid\frac{t_C}{t_{C\setminus\{i\}}}\right\}\right|
 \le\binom65=6.
\]
By hypothesis, $\delta_{i,H}\ge2$ for every
$(i,H)\in\mathcal I$. Also
$\delta_{i,H}\mid t_{H\cup\{i\}}\mid t$. Hence each
$\delta_{i,H}$ has a prime divisor $\rho\mid t$, and
$\mathcal I=\bigcup_{\rho\mid t}\mathcal I_\rho$.
Consequently,
\begin{equation}\label{eq:six-count}
 30\le\sum_{\rho\mid t}|\mathcal I_\rho|
     \le6\omega(t),\qquad \omega(t)\ge5.
\end{equation}

We next show that $|\mathcal I_\rho|\le5$ whenever $\rho$
is odd and $v_\rho(t)=1$. Set
\[
 L=\F_{q^t},\qquad E=\F_{q^{t/\rho}},
\]
and define the following set of sublines:
\[
 \mathcal D_E
 =\{D=g(\Pone(E)):g\in\PGL_2(L),\ |A\cap D|\ge4\}.
\]
The set $\mathcal D_E$ may be empty. No member contains all
of $A$, since \eqref{eq:subline-field} would imply
$T(A)\subseteq E$, contrary to $T(A)=\F_{q^t}$.
Thus $|A\cap D|\in\{4,5\}$ for every $D\in\mathcal D_E$.
Since $a_\rho=1$ and $E_{\rho,1}=E$, \eqref{eq:witness} gives
\[
 \mathcal I_\rho
 =\bigcup_{D\in\mathcal D_E}
   \{(i,H):i\in A\setminus D,\ H\subseteq A\cap D,\ |H|=4\}.
\]
This union is disjoint: a pair $(i,H)$ occurring for $D$ and
$D'$ would give $|D\cap D'|\ge|H|=4$, forcing $D=D'$ by
Lemma~\ref{lem:subline-intersection}. Therefore
\begin{equation}\label{eq:six-subline-count}
 |\mathcal I_\rho|
 =\sum_{D\in\mathcal D_E}
   (6-|A\cap D|)\binom{|A\cap D|}{4}.
\end{equation}
In particular, $\mathcal D_E=\varnothing$ implies
$|\mathcal I_\rho|=0$.

If there is a $D_0\in\mathcal D_E$ with $|A\cap D_0|=5$,
then every $D\in\mathcal D_E$ satisfies
\[
 |A\cap D\cap D_0|\ge |A\cap D|+|A\cap D_0|-6\ge3.
\]
Hence $D=D_0$, so $\mathcal D_E=\{D_0\}$ and
\eqref{eq:six-subline-count} gives $|\mathcal I_\rho|=5$.

If no such $D_0$ exists, write
$\mathcal D_E=\{D_1,\ldots,D_r\}$ and put
$B_j=A\cap D_j$, $C_j=A\setminus D_j$. Then
$|B_j|=4$ and $|C_j|=2$. For $j\ne j'$, counting in $A$
and Lemma~\ref{lem:subline-intersection} give
\[
 2=4+4-6\le|B_j\cap B_{j'}|\le2,\qquad
 |C_j\cap C_{j'}|
 =6-|B_j\cup B_{j'}|=0.
\]
The sets $C_1,\ldots,C_r$ are pairwise disjoint subsets of
the six-element set $A$, so $2r\le6$. The case $r=3$ is
excluded by Lemma~\ref{lem:six-geometry}, because
$[L:E]=\rho$ is an odd prime. Thus $r\le2$ and
\eqref{eq:six-subline-count} gives $|\mathcal I_\rho|=2r\le4$.
We have proved $|\mathcal I_\rho|\le5$ in all cases.

If $\omega(t)=5$, both inequalities in \eqref{eq:six-count}
are equalities, so $|\mathcal I_\rho|=6$ for every prime
$\rho\mid t$. All odd prime divisors must therefore have
exponent at least two. Write the five distinct prime divisors
as $\rho_1<\cdots<\rho_5$. Since $\rho_j$ is odd for
$j\ge2$ and $\rho_j\ge p_j$,
\[
 t\ge\rho_1\prod_{j=2}^5\rho_j^2
 \ge2(3\cdot5\cdot7\cdot11)^2>\Pi_6.
\]
If $\omega(t)\ge6$, then $t\ge\Pi_6$ directly.
The identity $Q_2(6)=\Pi_6$ follows from
Proposition~\ref{prop:prime-packing}.
\end{proof}

\begin{theorem}\label{thm:main}
For every prime power $q$ and all integers
$3\le k<d<n\le2d-2$,
\begin{equation}\label{eq:main}
 \boxed{L_q(n,k,d)=sQ_s(n),\qquad s=d-k+1.}
\end{equation}
\end{theorem}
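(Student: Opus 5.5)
The upper bound $L_q(n,k,d)\le sQ_s(n)$ is Corollary~\ref{cor:upper}, so the work is the matching lower bound. Let $A$ attain \eqref{eq:L} with degree $\ell$. By \eqref{eq:intrinsic}, $s\mid\ell$ and $\ell\ge st$ with $t=[T(A):K]$. It therefore suffices to show $t\ge Q_s(n)$. After the change of coordinates $\theta$, we work with $A\subseteq\Pone(L)$, $L=\F_{q^t}$. Every pair $(i,H)$ with $i\notin H$, $|H|=d$, is a repair instance attaining the cut-set bound. Lemma~\ref{lem:field} then gives $[T(H\cup\{i\}):T(H)]\ge s$ for all such pairs; this is the only input needed from repair.

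We split into cases according to $d$. If $d=4$, the range allows only $(n,k,d)=(5,3,4)$ and $(6,3,4)$. The first satisfies $d=n-1$, so it is Theorem~\ref{thm:all}. The second has $s=2$, and Lemma~\ref{lem:six} gives $t\ge\Pi_6=Q_2(6)$.

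Now assume $d\ge5$ and use the decomposition \eqref{eq:factor-bound}, $t\ge Q_s(n-z)N$.

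If $z=0$, then $N\ge1$ gives $t\ge Q_s(n)$ directly.

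If $z\ge1$, Lemma~\ref{lem:block} gives $\omega(N)\ge\max\{d+1,\min\{z,5\}(d-3)\}$. Hence $N\ge\Pi_{\max\{d+1,\min\{z,5\}(d-3)\}}$. We compare this with $Q_s(n)\le Q_s(n-z)p_{n+c_s}^{z}$ from \eqref{eq:Qgrowth}, so it suffices to show $N\ge p_{n+c_s}^z$.

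Since $c_s=\pi(s-1)\le s-1=d-k\le d-3$ and $n\le 2d-2$, we get $n+c_s\le 3d-5$. This is slightly larger than the index $3d-6$ in \eqref{eq:prime-compare}, so I will reconcile the indices. One route is the sharper estimate $c_s\le\pi(d-3)\le d-4$ for $d\ge5$, since $d-3\ge2$ is not itself a count of primes below it except in tiny cases. For $1\le z\le4$, \eqref{eq:prime-compare} then gives $N>p_{3d-6}^z\ge p_{n+c_s}^z$.

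For $z\ge5$ the factor $p_{n+c_s}^z$ grows while the lower bound on $N$ is capped at $\Pi_{5(d-3)}$, so I will instead use the second bound in \eqref{eq:Qgrowth}, $Q_s(n)\le \Pi_{n+c_s}/\Pi_{c_s}$. It then suffices to show $N\ge \Pi_{n+c_s}/\Pi_{c_s}$ outright, or to combine with $Q_s(n-z)$. Concretely, I expect to write
\[
t\ge Q_s(n-z)\,N\quad\text{and}\quad \Pi_{5(d-3)}\ge \Pi_{3d-5}\ge \Pi_{n+c_s}\quad(d\ge5),
\]
which already exceeds the bound on $Q_s(n)$ when $5(d-3)\ge 3d-5$, i.e.\ $d\ge5$. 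The same comparison may in fact cover every $z\ge1$ except where $\min\{z,5\}(d-3)$ is small, which is exactly where \eqref{eq:prime-compare} is used.

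The main obstacle is this arithmetic bookkeeping. It requires matching the prime-index bound $n+c_s$ against $3d-6$ in \eqref{eq:prime-compare} carefully, including the boundary $d=5$ and the off-by-one in $c_s$. If the index estimate fails at an edge, I would fall back on the cruder bound $\Pi_{\max\{d+1,z(d-3)\}}\ge\Pi_{n+c_s}/\Pi_{c_s}$, checked by hand for the small values of $d$. Combining the cases gives $t\ge Q_s(n)$, hence $\ell\ge sQ_s(n)$, and with Corollary~\ref{cor:upper} this yields \eqref{eq:main}.
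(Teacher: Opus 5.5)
Your proposal is correct and follows the paper's proof essentially step for step. It uses the same reduction to $t\ge Q_s(n)$ via \eqref{eq:intrinsic} and the same treatment of $d=4$ through Theorem~\ref{thm:all} and Lemma~\ref{lem:six}. For $d\ge5$ it has the same three cases: $z=0$; $1\le z\le4$ via Lemma~\ref{lem:block}, \eqref{eq:prime-compare} and the first bound in \eqref{eq:Qgrowth}; and $z\ge5$ via $t\ge N\ge\Pi_{n+c_s}\ge Q_s(n)$. The index estimate you hedge on has no exceptional cases: since $1$ is not prime, $\pi(x)\le x-1$ for every integer $x\ge1$, so $s\le d-2$ gives $c_s\le\pi(d-3)\le d-4$ and $n+c_s\le 3d-6$ outright, as in the paper, and no fallback is needed.
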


\begin{proof}
Corollary~\ref{cor:upper} gives the upper bound.
The case $n=d+1$ is Theorem~\ref{thm:all}, so assume
$n\ge d+2$. For the lower bound, \eqref{eq:intrinsic}
reduces the claim to $t\ge Q_s(n)$.
If $d=4$, the only remaining parameters are $(n,k,d)=(6,3,4)$.
The normalized set $A$ satisfies the hypotheses of
Lemma~\ref{lem:six} by \eqref{eq:local}, which gives
$t\ge Q_2(6)$.

Assume $d\ge5$. Since $s\le d-2$, every prime at most $s-1$
lies in $\{2,\ldots,d-3\}$, a set of $d-4$ integers.
Together with $n\le2d-2$, this gives
\begin{equation}\label{eq:index-bound}
 c_s=\pi(s-1)\le d-4,\qquad n+c_s\le3d-6.
\end{equation}
If $z=0$, \eqref{eq:factor-bound} proves the claim.
If $1\le z\le4$, Lemma~\ref{lem:block},
\eqref{eq:prime-compare}, and \eqref{eq:index-bound} give
\[
 N\ge\Pi_{\max\{d+1,z(d-3)\}}
   >p_{3d-6}^{z}\ge p_{n+c_s}^{z}.
\]
Together with \eqref{eq:factor-bound} and \eqref{eq:Qgrowth},
this yields
\[
 t\ge Q_s(n-z)N
   >Q_s(n-z)p_{n+c_s}^{z}\ge Q_s(n).
\]
If $z\ge5$, \eqref{eq:block-bound} and \eqref{eq:index-bound} give
$\omega(N)\ge5(d-3)\ge3d-6\ge n+c_s$.
Thus
\[
 t\ge N\ge\Pi_{n+c_s}\ge Q_s(n)
\]
by \eqref{eq:Qgrowth}. This completes the lower bound.
\end{proof}

\section{Smaller Extension Degrees Outside the Main Range}
\label{sec:small}

This section gives a family of RS codes with extension degree
strictly smaller than $sQ_s(n)$. Its purpose is to show that
the restriction $n\le2d-2$ in Theorem~\ref{thm:main} has
substantive content. We use a cover-free family to satisfy
the relative-degree condition of Proposition~\ref{prop:exist},
and obtain such families from polynomial graphs.

\begin{lemma}\label{lem:coverfree}
Suppose that there exist sets $B_1,\ldots,B_n\subseteq\{1,\ldots,m\}$ such that
\begin{equation}\label{eq:coverfree}
 B_i\not\subseteq\bigcup_{j\in J}B_j
\end{equation}
for every $i\in\{1,\ldots,n\}$ and every
$J\subseteq\{1,\ldots,n\}\setminus\{i\}$ with $|J|=d$.
Then $L_q(n,k,d)\le sQ_s(m)$ for every prime power $q$.
\end{lemma}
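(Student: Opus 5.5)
We will reduce the statement to Proposition~\ref{prop:exist}. Choose a minimizing tuple $a_1,\ldots,a_m$ in \eqref{eq:Q}: these are pairwise coprime integers, each at least $s$, with product $Q_s(m)$. Put $b=\prod_{r=1}^m a_r=Q_s(m)$, and for each $i$ set
\[
 f_i=\prod_{r\in B_i}a_r .
\]
Each $f_i$ is a positive divisor of $b$. The code will live over $\F_{q^{sb}}$, so once the hypotheses of Proposition~\ref{prop:exist} are verified we obtain $L_q(n,k,d)\le sb=sQ_s(m)$.

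The first step is distinctness of the $f_i$. The $a_r$ are pairwise coprime and all exceed $1$, so by unique factorization the map $I\mapsto\prod_{r\in I}a_r$ is injective on subsets of $\{1,\ldots,m\}$. It therefore suffices that the $B_i$ are distinct. Suppose $B_i=B_{i'}$ with $i\ne i'$. Since $n>d$, some $d$-set $J\subseteq\{1,\ldots,n\}\setminus\{i\}$ contains $i'$, and then $B_i\subseteq\bigcup_{j\in J}B_j$, contradicting \eqref{eq:coverfree}. The same argument with any $J$ shows that each $B_i$ is nonempty, because the empty set is contained in every union.

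The second step is the degree condition \eqref{eq:degree-condition}. Fix $i\notin J$ with $|J|=d$, and put $u_J=\lcm_{j\in J}f_j$. By pairwise coprimality,
\[
 u_J=\prod_{r\in\bigcup_{j\in J}B_j}a_r,\qquad
 \gcd(f_i,u_J)=\prod_{r\in B_i\cap\bigcup_{j\in J}B_j}a_r .
\]
Hence
\[
 \frac{f_i}{\gcd(f_i,u_J)}=\prod_{r\in B_i\setminus\bigcup_{j\in J}B_j}a_r .
\]
By \eqref{eq:coverfree} this product is over a nonempty index set, so it is at least $\min_r a_r\ge s$. Proposition~\ref{prop:exist} then applies.

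The only point needing care is the routine gcd/lcm computation with products of pairwise coprime factors; it follows by comparing prime valuations, since each prime divides at most one $a_r$. No real obstacle is expected.
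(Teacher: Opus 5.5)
Your proposal is correct and follows the paper's proof essentially step for step: the same choice $f_i=\prod_{r\in B_i}a_r$ from a minimizing tuple for $Q_s(m)$, the same use of the cover-free condition to get distinct nonempty $B_i$, and the same computation $f_i/\gcd(f_i,u_J)=\prod_{r\in B_i\setminus\bigcup_{j\in J}B_j}a_r\ge s$ before invoking Proposition~\ref{prop:exist}. You also spell out why the $B_i$ must be distinct and nonempty (using $n-1\ge d$), which the paper only asserts.
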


\begin{proof}
Choose pairwise coprime integers $a_1,\ldots,a_m\ge s$
with product $b=Q_s(m)$, and set
$f_i=\prod_{u\in B_i}a_u$.
Condition~\eqref{eq:coverfree} forces the sets $B_i$ to be
distinct and nonempty. Unique prime factorization and the
coprimality of the $a_u$ give distinct divisors $f_i$ of $b$,
and, for every such $i$ and $J$,
\begin{equation}\label{eq:coverfree-degrees}
 \frac{f_i}{\gcd(f_i,\lcm_{j\in J}f_j)}
 =\prod_{u\in B_i\setminus\bigcup_{j\in J}B_j}a_u\ge s.
\end{equation}
The final inequality holds because
$B_i\setminus\bigcup_{j\in J}B_j\ne\varnothing$ by
\eqref{eq:coverfree} and every $a_u\ge s$.
Thus the distinct divisors $f_1,\ldots,f_n$ of $b$ satisfy
\eqref{eq:degree-condition}. Proposition~\ref{prop:exist} gives
an RS code over $\F_{q^{sb}}$ attaining the cut-set bound for
every repair instance. Since $b=Q_s(m)$, this proves
$L_q(n,k,d)\le sQ_s(m)$.
\end{proof}

\begin{theorem}\label{thm:graphs}
Let $R$ be a prime power and $v\ge1$ an integer. If
\begin{equation}\label{eq:graph-parameters}
 dv+1\le R,\qquad R(dv+1)<n\le R^{v+1},
\end{equation}
then, for every prime power $q$,
\begin{equation}\label{eq:graphs}
 L_q(n,k,d)\le sQ_s\bigl(R(dv+1)\bigr)<sQ_s(n).
\end{equation}
In particular, for every prime power $R\ge d+2$,
\begin{equation}\label{eq:graphs-special}
 L_q(R^2,k,d)\le sQ_s(R(d+1))<sQ_s(R^2).
\end{equation}
\end{theorem}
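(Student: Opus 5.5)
We reduce the upper bound to Lemma~\ref{lem:coverfree} with $m=R(dv+1)$ and build the required cover-free family from graphs of low-degree polynomials. Fix a subset $S\subseteq\F_R$ with $|S|=dv+1$, which exists since $dv+1\le R$. Take the ground set $\{1,\ldots,m\}$ to be $S\times\F_R$, of size $R(dv+1)=m$. For each polynomial $g\in\F_R[X]$ of degree at most $v$, define the graph
\[
 B_g=\{(x,g(x)):x\in S\}.
\]
There are $R^{v+1}\ge n$ such polynomials, and we choose $n$ of them.

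The key step is to verify the cover-free condition \eqref{eq:coverfree}. Two distinct polynomials of degree at most $v$ agree on at most $v$ points of $S$. Hence $|B_g\cap B_{g'}|\le v$ for $g\ne g'$. For any $d$ other polynomials, the union of their graphs meets $B_g$ in at most $dv<dv+1=|B_g|$ points, so $B_g$ is not covered. Lemma~\ref{lem:coverfree} then gives $L_q(n,k,d)\le sQ_s(R(dv+1))$.

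For the strict inequality $Q_s(R(dv+1))<Q_s(n)$, we show that $Q_s$ is strictly increasing in $m$. Take a minimizing $(m+1)$-tuple and delete one factor; the resulting $m$-tuple remains admissible, and its product is smaller by a factor of at least $s\ge2$. Thus $Q_s(m)<Q_s(m+1)$, and the hypothesis $R(dv+1)<n$ yields the claim.

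For the special case \eqref{eq:graphs-special}, set $v=1$ and $n=R^2$. The conditions become $d+1\le R$ and $R(d+1)<R^2$. Both hold when $R\ge d+2$, and $n=R^2=R^{v+1}$ is allowed.

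The main obstacle is only bookkeeping: checking that the ground set size matches $m$ and that at least $n$ distinct polynomials are available. The agreement bound is a standard root count and carries the argument.
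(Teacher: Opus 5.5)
Your proposal is correct and follows the paper's proof essentially step for step. You use the same polynomial-graph cover-free family (agreement at most $v<dv+1$ points) fed into Lemma~\ref{lem:coverfree}, and the same factor-deletion argument for the strict inequality. The only difference is that you delete one factor at a time, while the paper deletes $n-m$ factors at once to obtain $Q_s(n)\ge s^{n-m}Q_s(m)$.
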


\begin{proof}
Put $M=dv+1$ and choose distinct $x_1,\ldots,x_M\in\F_R$.
For each $f\in\F_R[X]$ of degree at most $v$, let
\[
 B_f=\{(a,f(x_a)):1\le a\le M\}
 \subseteq\{1,\ldots,M\}\times\F_R.
\]
This is the polynomial-graph form of the classical
superimposed-code construction~\cite{KS}.
There are $R^{v+1}$ distinct sets on a ground set of size
$m=MR$, and
\[
 |B_f|=M,
 \qquad |B_f\cap B_g|\le v\quad(f\ne g).
\]
Thus the union of any $d$ other sets covers at most $dv<M$
points of $B_f$. Any $n$ of these sets satisfy
\eqref{eq:coverfree}, so Lemma~\ref{lem:coverfree} gives
the first inequality in \eqref{eq:graphs}.
For $m<n$, deleting $n-m$ factors from a minimizing tuple
in \eqref{eq:Q} shows that
\[
 Q_s(n)\ge s^{n-m}Q_s(m)>Q_s(m).
\]
Finally take $v=1$ and $n=R^2$ to obtain
\eqref{eq:graphs-special}.
\end{proof}

The auxiliary field size $R$ is independent of the repair
base field size $q$. In \eqref{eq:graphs-special},
$n=R^2>2d-2$, and the strict inequality rules out extending
\eqref{eq:main} to all $n>d$ without a range restriction.
These examples do not assert that $2d-2$ is the exact
transition point for every fixed pair $(k,d)$.

\section{Nonlinear Repair at the Cut-Set Bound}
\label{sec:nonlinear}

This section shows that a nonlinear repair scheme attaining the cut-set
bound induces a linear scheme over the prime field. Consequently, the
extension-degree lower bound proved above also applies after the symbol
field is regarded as an extension of its prime field. We use the partition
criterion of \cite[Observation~2.1]{CT} and the fiber-counting step in the
proof of \cite[Theorem~5.6]{CT}. The latter theorem concerns symmetric
repair of scalar Reed--Solomon codes over a prime field; the equality
argument over arbitrary finite extensions is given below.

Write \(q=p^e\), where \(p\) is prime, and put
\[
 F=\mathbb F_{q^\ell}=\mathbb F_{p^m},\qquad m=e\ell,
 \qquad s=d-k+1.
\]
For a fixed repair instance \((c,H)\), let
\(\mu_x:F\to\mathcal Y_x=\mu_x(F)\), \(x\in H\), be deterministic
helper maps. For \(y\in\mathcal Y_x\), its fiber is
\(\mu_x^{-1}(y)=\{u\in F:\mu_x(u)=y\}\).
Using the bandwidth convention \eqref{eq:repair-cost}, write
\begin{equation}\label{eq:nl-cost}
 a_x=|\mathcal Y_x|,\qquad
 B(\mu)=\sum_{x\in H}\log_q a_x.
\end{equation}
The cut-set bound gives \(B(\mu)\ge d\ell/s\).
Neither the individual costs \(\log_q a_x\) nor their sum is
assumed to be an integer.

\begin{theorem}\label{thm:nonlinear}
Suppose that an exact repair scheme for \((c,H)\) satisfies
\(B(\mu)=d\ell/s\). For every \(x\in H\), there is an
\(\mathbb F_p\)-subspace \(N_x\subseteq F\) such that the fibers of
\(\mu_x\) are precisely the cosets of \(N_x\), and
\begin{equation}\label{eq:nl-quotient}
 \dim_{\mathbb F_p}(F/N_x)=\frac{m}{s}=\frac{e\ell}{s}.
\end{equation}
In particular, \(s\mid m\). After relabeling the helper outputs, the
helper maps are \(\mathbb F_p\)-linear and admit an
\(\mathbb F_p\)-linear decoder.
\end{theorem}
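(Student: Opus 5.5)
We prove the theorem by combining the cut-set inequalities \eqref{eq:alphabet-cutset} with the Con--Tamo partition criterion for exact repair, and then by a fiber-counting argument.

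\emph{Step 1: every helper has the same alphabet size.} Put $\lambda_x=\log_q a_x$. By \eqref{eq:alphabet-cutset}, every $s$-subset $J\subseteq H$ satisfies $\sum_{x\in J}\lambda_x\ge\ell$. Averaging this over all $s$-subsets gives $B(\mu)\ge d\ell/s$, and equality holds in the average only if each subset sum equals $\ell$. Because $s<d$, any two helpers can be exchanged between two $s$-subsets that otherwise agree. Hence all $\lambda_x$ are equal, with $\lambda_x=\ell/s$. Therefore $a_x=q^{\ell/s}=p^{m/s}$ for every $x\in H$. This is a power of $p$, but we do not yet know that $s\mid m$.

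\emph{Step 2: a translation-invariance argument on an $s$-subset.} Fix $J\subseteq H$ with $|J|=s$. Because $s$ points of $J$ together with the $k-1$ points outside $J\cup\{c\}$ give $d$ points, the restriction map $f\mapsto(f(x))_{x\in J}$ from codewords vanishing on a suitable set of $k-1-(d-s)$... In the simpler form, we use the dual (partition) criterion. The code punctured to $B=H\cup\{c\}$ has the following property: for every $t\in F$ and every $(u_x)_{x\in J}\in F^J$, there is a codeword $g$ with $g(c)=t$, $g(x)=u_x$ for $x\in J$, and $g$ fixed (say zero) on a chosen set $H\setminus J$ of $d-s=k-1$ helpers. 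This uses $k$ free interpolation conditions on $F[X]_{<k}$ and is valid since $|J|+1+(d-s)=k+s$. Precisely, we fix the values on $H\setminus J$ to zero and require $g(c)=t$, which leaves an affine family parametrized by the values on $J$ after imposing $k$ constraints.

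The counting then runs as follows. The map $t\mapsto(\text{fiber labels of }u_x)_{x\in J}$ must be injective, by exact repair with the other helpers' outputs held fixed. There are $q^\ell$ targets and $\prod_{x\in J}a_x=q^\ell$ label tuples, so this map is a bijection. Following the proof of \cite[Theorem~5.6]{CT}, we derive the key identity. The codewords vanishing on $H\setminus J$ form an $F$-space $W$. Their projection to $J\cup\{c\}$ is an MDS code of dimension $s$ on $s+1$ points, namely $\{(u,t):t=\sum_x\gamma_xu_x\}$ with all $\gamma_x\ne0$. Exact repair says that $t$ is determined by the labels $(\mu_x(u_x))$. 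Bijectivity then forces $\sum_x\gamma_x\mu_x^{-1}(y_x)$, a sumset of the fibers, to be a single value. Each fiber therefore has size $|F|/a_x$, and the sumset of the scaled fibers must be a singleton. This is possible only if every fiber is a singleton after translation, which is the wrong conclusion. So we refine the argument: we keep $J$ of size $s-1$ plus the fiber structure. The correct bijection statement is that for $J$ of size $s-1$, adding a single helper $x_0$ gives the relation $t-\gamma_{x_0}u_{x_0}=\sum_{J}\gamma_xu_x$.

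From this we aim to show that the partition $\{\gamma_{x_0}\mu_{x_0}^{-1}(y)\}$ satisfies: each set $\gamma_{x_0}\mu_{x_0}^{-1}(y)+S$, where $S$ runs over the sumsets of fibers on $J$, lies in a single class of a partition. Equivalently, with equal fiber sizes $p^{m-m/s}$, the sumset $|X+Y|=|X|$ forces $X$ to be a union of cosets of the stabilizer of $Y$, and then the stabilizer equals a coset of a subgroup (the Kneser/Cauchy--Davenport equality case in the elementary abelian group $(F,+)$).

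\emph{Step 3: conclusion.} Having shown that each fiber of $\mu_x$ is a coset of an additive subgroup $N_x$ with $|N_x|=p^{m}/a_x=p^{m-m/s}$, we note that $|N_x|$ is a power of $p$. Hence $m/s$ is an integer, and \eqref{eq:nl-quotient} follows. We relabel the outputs by the quotient $F\to F/N_x$; this map is $\mathbb F_p$-linear. The decoder is $\mathbb F_p$-linear because $f(c)$ is then an $\mathbb F_p$-linear function of the quotients. Indeed, the map from the data to the outputs is $\mathbb F_p$-linear, and $f(c)$ is constant on its kernel, so $f(c)$ factors linearly.

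\emph{Main obstacle.} The expected hardest step is Step 2: showing that the fibers are cosets. The first approach to try is the following. We fix the outputs of all helpers except two, say $x,x'$. The affine relation $\gamma u_x+\gamma' u_{x'}=\text{const}$ on the compatible codewords, together with the tight counting, shows that the map $u_x\mapsto$ fiber of $u_{x'}$ is well defined on the fibers of $\mu_x$. This gives the equation $\gamma_x\mu_x^{-1}(y)=\gamma_{x'}\mu_{x'}^{-1}(y')+\text{shift}$ as sets. Two fibers of $\mu_x$, $X_1$ and $X_2$, are then translates of one another. Applying this to $X_1-X_1$ shows that $X_1-u$ is closed under subtraction, hence is a subgroup.
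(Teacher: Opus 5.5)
Your Step 1 agrees with the paper. Note that $s\mid m$ is already immediate there, since the integer $a_x$ satisfies $a_x^s=p^m$.

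\textbf{The gap is in Step 2.} This is where the theorem is actually proved, and you never show that the fibers are cosets.
\begin{itemize}
\item \emph{The interpolation set-up is wrong.} Prescribing $g(c)$, $g|_J$ and $g|_{H\setminus J}$ imposes $k+s$ conditions on the $k$-dimensional space $F[X]_{<k}$. The codewords vanishing on the $k-1$ points of $H\setminus J$ form a one-dimensional space, not an $s$-dimensional MDS code. That is why you reach a contradiction.
\item \emph{What survives gives fiber sizes but no structure.} It is correct that, for $g_1$ spanning that one-dimensional space, $t\mapsto(\mu_x(tg_1(x)))_{x\in J}$ is a bijection $F\to\prod_{x\in J}\mathcal Y_x$. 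This yields equal fiber sizes $a^{s-1}$, since a fiber of $\mu_{x_0}$ corresponds to all tuples with a fixed $x_0$-coordinate. You later assume this without proof. It gives no additive structure.
\item \emph{The refinement sketches do not close the gap.} The relation $\gamma u_x+\gamma'u_{x'}=\text{const}$ needs the target and further helper \emph{values} fixed, not merely their outputs. The Kneser-type equality $|X+Y|=|X|$ is invoked but never derived for any pair of fibers.
\item \emph{The final inference is false.} Fibers that are translates of one another need not be cosets. In $\mathbb F_2^3$, the set $\{000,100,010,001\}$ and its translate by $111$ partition the space, but the set is not a coset.
\end{itemize}

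\textbf{The missing ingredient is the paper's sumset bound.}
\begin{itemize}
\item Choose $S\subseteq H$ with $|S|=k$, and fix one fiber $D_x$ of each $\mu_x$, $x\in S$.
\item Since $f(c)=\sum_{x\in S}\lambda_xf(x)$ with nonzero Lagrange coefficients $\lambda_x$, the target values compatible with these messages form $W=\sum_{x\in S}\lambda_xD_x$.
\item The remaining $d-k=s-1$ helpers must distinguish these values, so $|W|\le a^{s-1}=t_0$.
\item $W$ contains a translate of each $\lambda_xD_x$, so every fiber has size at most $t_0$. Counting then forces every fiber to have size exactly $t_0$.
\item For distinct $i,j\in S$ this gives $t_0=|\lambda_iD_i|=|\lambda_jD_j|\le|\lambda_iD_i+\lambda_jD_j|\le|W|\le t_0$.
\item The equality $|U+V|=|U|=|V|$ forces $U$ and $V$ to be cosets of a single $\mathbb F_p$-subspace $G$, and $G$ depends only on $D_j$.
\item Varying $D_i$ with $D_j$ fixed shows that all fibers of $\mu_i$ are cosets of $N_i=\lambda_i^{-1}G$.
\end{itemize}
With this in place, your Step 3 goes through as written.
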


\begin{proof}
Put \(Q=|F|\). For each \(J\subseteq H\) with \(|J|=s\),
\eqref{eq:alphabet-cutset} gives
\[
 \delta_J:=\sum_{x\in J}\log_q a_x-\ell\ge0.
\]
Their average satisfies
\[
 \frac{1}{\binom{d}{s}}\sum_{\substack{J\subseteq H\\|J|=s}}\delta_J
 =\frac{s}{d}B(\mu)-\ell=0.
\]
Thus every \(\delta_J=0\). Comparing two such subsets that differ
in one helper gives
\begin{equation}\label{eq:nl-equal-alphabets}
 a_x=a=Q^{1/s}\quad(x\in H),\qquad
 \prod_{x\in J}a_x=Q\quad(|J|=s).
\end{equation}
Since \(a\) is an integer and \(a^s=Q=p^m\), unique prime
factorization gives \(s\mid m\) and \(a=p^{m/s}\).
Put \(t_0=Q/a=a^{s-1}\).

For subsets \(U,V\subseteq F\), write
\(U+V=\{u+v:u\in U,\ v\in V\}\) for their sumset.
Choose \(S\subseteq H\), \(|S|=k\), and a fiber \(D_x\) of
\(\mu_x\) for each \(x\in S\). Interpolation identifies the possible
target values with
\begin{equation}\label{eq:nl-target-sumset}
 W=\sum_{x\in S}\lambda_xD_x,\qquad
 \lambda_x=\prod_{z\in S\setminus\{x\}}\frac{c-z}{x-z}\ne0.
\end{equation}
The messages from \(S\) are fixed. By the partition criterion, the
remaining \(d-k=s-1\) messages must distinguish distinct elements of
\(W\). Consequently,
\[
 |W|\le\prod_{x\in H\setminus S}a_x=a^{s-1}=t_0.
\]
Fixing one element in each \(D_x\), \(x\ne i\), shows that
\(W\) contains a translate of \(\lambda_iD_i\). Hence
\(|D_i|\le|W|\le t_0\). Since \(S\) and its fibers were
arbitrary, this holds for every fiber of every helper.
Each helper partitions \(F\) into \(a\) nonempty fibers, so
\[
 \sum_{y\in\mathcal Y_x}|\mu_x^{-1}(y)|=Q=at_0,
 \qquad |\mu_x^{-1}(y)|\le t_0.
\]
Every term must therefore equal \(t_0\):
\begin{equation}\label{eq:nl-fiber-size}
 |\mu_x^{-1}(y)|=t_0
 \qquad(x\in H,\ y\in\mathcal Y_x).
\end{equation}
For distinct \(i,j\in S\), we obtain
\begin{equation}\label{eq:nl-no-expansion}
 t_0=|\lambda_iD_i|=|\lambda_jD_j|
 \le|\lambda_iD_i+\lambda_jD_j|\le|W|\le t_0.
\end{equation}

For completeness, if nonempty subsets \(U,V\subseteq F\) satisfy
\(|U+V|=|U|=|V|\), then they are cosets of the same
\(\mathbb F_p\)-subspace. Indeed, fix \(v_0\in V\).
All translates \(U+v\),
\(v\in V\), equal \(U+v_0\), so
\[
 G=\Span_{\mathbb F_p}(V-v_0),\qquad U+G=U,
 \qquad V\subseteq v_0+G,
 \qquad |V|\le|G|\le|U|.
\]
The last inequalities are equalities, proving the claim.
Apply this fact to \eqref{eq:nl-no-expansion}.
For any \(i\in H\), choose \(S\) containing \(i\) and
\(j\in S\setminus\{i\}\). Fixing \(D_j\) and varying \(D_i\)
keeps the common subspace
\(G=\Span_{\mathbb F_p}\bigl(\lambda_j(D_j-D_j)\bigr)\) fixed.
Hence every fiber of \(\mu_i\) is a coset of the
\(\mathbb F_p\)-subspace \(N_i=\lambda_i^{-1}G\), and
\[
 |F/N_i|=a=p^{m/s},
\]
which proves \eqref{eq:nl-quotient}.

For each helper, define the bijection
\(r_x:\mathcal Y_x\to F/N_x\) by
\(r_x(\mu_x(u))=u+N_x\). This is well-defined because the fibers
are precisely the cosets of \(N_x\). The joint download map on
\(\mathcal C=\operatorname{RS}_k(A)\) is now
\[
 M:\mathcal C\longrightarrow\prod_{x\in H}F/N_x,
 \qquad M(v)=(v_x+N_x)_{x\in H},
\]
which is \(\mathbb F_p\)-linear. Let \(\pi_c(v)=v_c\).
If \(v\in\ker M\), then \(v\) and the zero codeword give the
same messages, so exact repair forces \(v_c=0\). Thus
\(\ker M\subseteq\ker\pi_c\), and
\[
 D:\operatorname{im}M\longrightarrow F,
 \qquad D(M(v))=v_c
\]
is well-defined and \(\mathbb F_p\)-linear. Extending \(D\)
linearly to \(\prod_{x\in H}F/N_x\) gives the claimed decoder.
\end{proof}

\begin{corollary}\label{cor:nonlinear}
Suppose that \(\operatorname{RS}_k(A)\) over
\(\mathbb F_{q^\ell}\), \(q=p^e\), attains the cut-set bound by
nonlinear exact repair for every target and every prescribed set of
\(d\) helpers, with bandwidth measured by \eqref{eq:repair-cost}. If
\(3\le k<d<n\le2d-2\), then
\begin{equation}\label{eq:nl-general-base-lower}
 m=e\ell\ge sQ_s(n).
\end{equation}
For a prime base field, let \(L_p^{\mathrm{NL}}(n,k,d)\) denote the
minimum extension degree under this bandwidth convention. Then,
for all \(3\le k<d<n\),
\begin{equation}\label{eq:nl-prime-equality}
 L_p^{\mathrm{NL}}(n,k,d)=L_p(n,k,d).
\end{equation}
In particular, if \(n\le2d-2\), both minima equal \(sQ_s(n)\).
\end{corollary}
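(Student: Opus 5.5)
The plan is to reduce everything to Theorem~\ref{thm:nonlinear}: for a prime base field, a nonlinear scheme at the cut-set bound becomes a linear one, and then the linear lower and upper bounds apply. The only subtlety is bookkeeping between the base fields $\F_q$ and $\F_p$, since the bandwidth in \eqref{eq:repair-cost} is measured in $q$-ary units while linearization happens over $\F_p$.

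For \eqref{eq:nl-general-base-lower}, fix a code $\RS_k(A)$ over $F=\F_{q^\ell}=\F_{p^m}$ that attains $B(\mu)=d\ell/s$ for every $(c,H)$ by nonlinear repair. For each instance, Theorem~\ref{thm:nonlinear} gives helper maps $u\mapsto u+N_x$ with $\dim_{\F_p}(F/N_x)=m/s$ and an $\F_p$-linear decoder. This is an exact $\F_p$-linear repair scheme. Its bandwidth in $p$-ary units is $\sum_{x\in H}\dim_{\F_p}(F/N_x)=dm/s$, which is the cut-set bound for $F$ viewed as an extension of degree $m$ over $\F_p$. The cut-set bound \eqref{eq:cutset} applied over $\F_p$ shows this is minimal, so $\beta_A(c;H)=dm/s$ over the base field $\F_p$ for every instance. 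Hence $A\subset\F_{p^m}$ witnesses the condition in \eqref{eq:L} with $q$ replaced by $p$ and $\ell$ by $m$. This gives $m\ge L_p(n,k,d)$, and Theorem~\ref{thm:main} with prime power $p$ gives $m\ge sQ_s(n)$ when $n\le 2d-2$.

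For \eqref{eq:nl-prime-equality}, take $e=1$, so $m=\ell$. The argument just given shows that any $\ell$ admissible for nonlinear repair is admissible for $\F_p$-linear repair, so $L_p^{\mathrm{NL}}\ge L_p$. Conversely, a linear scheme is a particular deterministic scheme. For linear maps $\log_p|\mu_x(F)|=\dim_{\F_p}\mu_x(F)$, so its cost under \eqref{eq:repair-cost} is the same number $d\ell/s$. Thus a code attaining the linear optimum attains the nonlinear cut-set bound, and $L_p^{\mathrm{NL}}\le L_p$. The final sentence then follows from Theorem~\ref{thm:main}.

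The step needing most care is confirming that the repair produced by Theorem~\ref{thm:nonlinear} fits the linear repair model used to define $\beta_A$. The decoder is $\F_p$-linear on $\prod_x F/N_x$ and the helper maps are $\F_p$-linear. The Guruswami--Wootters formulation of Proposition~\ref{prop:dual} is stated for linear repair over the chosen base field, so it applies verbatim with $K=\F_p$. The remaining check is the unit conversion $\log_q a=\frac{1}{e}\log_p a$, which shows that cut-set equality over $\F_q$ corresponds exactly to cut-set equality over $\F_p$.
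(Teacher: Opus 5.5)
Your proposal is correct and takes the same route as the paper. You apply Theorem~\ref{thm:nonlinear} to each instance to obtain an $\F_p$-linear scheme at the $\F_p$ cut-set bound $dm/s$, conclude $m\ge L_p(n,k,d)$ and invoke Theorem~\ref{thm:main} over $\F_p$, and for $e=1$ take the reverse inequality from the inclusion of linear schemes among deterministic ones. You also spell out two details the paper leaves implicit: the unit conversion $\log_q a_x=\frac1e\log_p a_x$, and the fact that the induced scheme is minimal over $\F_p$.
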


\begin{proof}
Apply Theorem~\ref{thm:nonlinear} separately to every repair instance.
The same evaluation set, viewed in \(\mathbb F_{p^m}\), admits
\(\mathbb F_p\)-linear repair with \(m/s\) prime-field symbols from
each helper. This linearization holds for all \(3\le k<d<n\).
When \(e=1\), it gives
\(L_p^{\mathrm{NL}}(n,k,d)\ge L_p(n,k,d)\) throughout that range.
The reverse inequality holds because linear repair schemes are
included among the allowed helper maps, proving
\eqref{eq:nl-prime-equality}. Under the additional assumption
\(n\le2d-2\), Theorem~\ref{thm:main}, applied with base field
\(\mathbb F_p\), gives \eqref{eq:nl-general-base-lower} and the
stated value of both minima.
\end{proof}

For a nonprime base field, the induced maps are linear over
\(\mathbb F_p\); accordingly, \eqref{eq:nl-general-base-lower}
controls the prime-field extension degree \(m=e\ell\).
The divisibility conclusion is \(s\mid e\ell\), which need not
imply \(s\mid\ell\).

\section{Conclusion}\label{sec:conclusion}

We determined the minimum extension degree for RS repair at the
cut-set bound when $3\le k<d<n\le2d-2$ and characterized the
admissible degrees when all surviving nodes assist repair.
At cut-set equality, nonlinear repair induces linear repair
over the prime field, so the corresponding extension-degree
lower bounds remain valid without a linearity assumption.
The smaller constructions for $n>2d-2$ motivate determining
which parameters beyond the main range still satisfy
$L_q(n,k,d)=sQ_s(n)$.


\begin{thebibliography}{99}

\bibitem{DGW}
A.~G. Dimakis, P.~B. Godfrey, Y.~Wu, M.~J. Wainwright,
and K.~Ramchandran,
``Network coding for distributed storage systems,''
\emph{IEEE Trans. Inf. Theory}, vol.~56, no.~9,
pp.~4539--4551, Sep.~2010,
doi: \href{https://doi.org/10.1109/TIT.2010.2054295}{10.1109/TIT.2010.2054295}.

\bibitem{GW}
V.~Guruswami and M.~Wootters,
``Repairing Reed--Solomon codes,''
\emph{IEEE Trans. Inf. Theory}, vol.~63, no.~9,
pp.~5684--5698, Sep.~2017,
doi: \href{https://doi.org/10.1109/TIT.2017.2702660}{10.1109/TIT.2017.2702660}.

\bibitem{YBISIT}
M.~Ye and A.~Barg,
``Explicit constructions of MDS array codes and RS codes with
optimal repair bandwidth,''
in \emph{Proc. IEEE Int. Symp. Inf. Theory (ISIT)},
Barcelona, Spain, Jul.~2016, pp.~1202--1206,
doi: \href{https://doi.org/10.1109/ISIT.2016.7541489}{10.1109/ISIT.2016.7541489}.

\bibitem{DM}
H.~Dau and O.~Milenkovic,
``Optimal repair schemes for some families of full-length
Reed--Solomon codes,''
in \emph{Proc. IEEE Int. Symp. Inf. Theory (ISIT)},
Aachen, Germany, Jun.~2017, pp.~346--350,
doi: \href{https://doi.org/10.1109/ISIT.2017.8006547}{10.1109/ISIT.2017.8006547}.

\bibitem{TYB}
I.~Tamo, M.~Ye, and A.~Barg,
``The repair problem for Reed--Solomon codes: Optimal repair
of single and multiple erasures with almost optimal node size,''
\emph{IEEE Trans. Inf. Theory}, vol.~65, no.~5,
pp.~2673--2695, May~2019,
doi: \href{https://doi.org/10.1109/TIT.2018.2884075}{10.1109/TIT.2018.2884075}.

\bibitem{ZFF}
W.~Zhao, F.-W.~Fu, and X.~Fu,
``A partial-exclusion repair scheme for MDS codes,''
arXiv:2603.12585v1, Mar.~13, 2026.
[Online]. Available: \url{https://arxiv.org/abs/2603.12585v1}.

\bibitem{QFFXISIT}
J.~Qiu, W.~Fang, S.-T.~Xia, and F.-W.~Fu,
``Improved constructions of Reed--Solomon codes with optimal
repair bandwidth,''
in \emph{Proc. IEEE Int. Symp. Inf. Theory (ISIT)},
Guangzhou, China, Jun.--Jul.~2026, pp.~1--6,
doi: \href{https://doi.org/10.1109/ISIT62367.2026.11653944}{10.1109/ISIT62367.2026.11653944}.

\bibitem{QFFX}
J.~Qiu, W.~Fang, S.-T.~Xia, and F.-W.~Fu,
``Reed--Solomon codes with optimal repair bandwidth:
A basis-transformation approach,''
arXiv:2601.10685v4, Oct.~17, 2026.
[Online]. Available: \url{https://arxiv.org/abs/2601.10685v4}.

\bibitem{LWJ}
W.~Li, Z.~Wang, and H.~Jafarkhani,
``On the sub-packetization size and the repair bandwidth of
Reed--Solomon codes,''
\emph{IEEE Trans. Inf. Theory}, vol.~65, no.~9,
pp.~5484--5502, Sep.~2019,
doi: \href{https://doi.org/10.1109/TIT.2019.2917425}{10.1109/TIT.2019.2917425}.

\bibitem{GJ}
V.~Guruswami and H.~Jiang,
``Near-optimal repair of Reed--Solomon codes with low
sub-packetization,''
in \emph{Proc. IEEE Int. Symp. Inf. Theory (ISIT)},
Paris, France, Jul.~2019, pp.~1077--1081,
doi: \href{https://doi.org/10.1109/ISIT.2019.8849719}{10.1109/ISIT.2019.8849719}.

\bibitem{CV}
A.~Chowdhury and A.~Vardy,
``Improved schemes for asymptotically optimal repair of MDS codes,''
\emph{IEEE Trans. Inf. Theory}, vol.~67, no.~8,
pp.~5051--5068, Aug.~2021,
doi: \href{https://doi.org/10.1109/TIT.2021.3071447}{10.1109/TIT.2021.3071447}.

\bibitem{WLLZZ}
K.~Wang, Z.~Liu, R.~Li, Y.~Zhao, and Y.~Zhang,
``Asymptotically optimal repair of Reed--Solomon codes with small
sub-packetization under rack-aware model,''
in \emph{Proc. IEEE Inf. Theory Workshop (ITW)},
Sydney, Australia, Sep.--Oct.~2025, pp.~674--679,
doi: \href{https://doi.org/10.1109/ITW62417.2025.11240535}{10.1109/ITW62417.2025.11240535}.

\bibitem{KKK}
W.~Kim, S.~Kruglik, and H.~M.~Kiah,
``Trace repair never loses to classical repair,''
in \emph{Proc. IEEE Int. Symp. Inf. Theory (ISIT)},
Guangzhou, China, Jun.--Jul.~2026, pp.~1--6,
doi: \href{https://doi.org/10.1109/ISIT62367.2026.11653992}{10.1109/ISIT62367.2026.11653992}.

\bibitem{LXZ}
Z.~Liu, J.~Xu, and Z.~Zhang,
``Calculating the I/O cost of linear repair schemes for RS codes
evaluated on subspaces via exponential sums,''
\emph{IEEE Trans. Inf. Theory}, vol.~72, no.~2,
pp.~994--1010, Feb.~2026,
doi: \href{https://doi.org/10.1109/TIT.2025.3645309}{10.1109/TIT.2025.3645309}.

\bibitem{DD}
I.~Duursma and H.~Dau,
``Low bandwidth repair of the RS(10,4) Reed--Solomon code,''
in \emph{Proc. Inf. Theory Appl. Workshop (ITA)},
San Diego, CA, USA, Feb.~2017,
doi: \href{https://doi.org/10.1109/ITA.2017.8023449}{10.1109/ITA.2017.8023449}.

\bibitem{AR}
D.~Aggarwal and S.~Ram,
``Polynomial matrices, splitting subspaces and Krylov subspaces
over finite fields,''
\emph{Finite Fields Appl.}, vol.~83, Art.~no.~102081, Oct.~2022,
doi: \href{https://doi.org/10.1016/j.ffa.2022.102081}{10.1016/j.ffa.2022.102081}.

\bibitem{CT}
R.~Con and I.~Tamo,
``Nonlinear repair of Reed--Solomon codes,''
\emph{IEEE Trans. Inf. Theory}, vol.~68, no.~8,
pp.~5165--5177, Aug.~2022,
doi: \href{https://doi.org/10.1109/TIT.2022.3167615}{10.1109/TIT.2022.3167615}.

\bibitem{LN}
R.~Lidl and H.~Niederreiter,
\emph{Finite Fields}, 2nd ed., Encyclopedia of Mathematics
and its Applications, vol.~20.
Cambridge, U.K.: Cambridge University Press, 1997.

\bibitem{Cameron}
P.~J. Cameron,
\emph{Projective and Polar Spaces}, 2nd ed., lecture notes.
London, U.K.: Queen Mary and Westfield College, 2000.
[Online]. Available:
\url{https://webspace.maths.qmul.ac.uk/p.j.cameron/pps/}.

\bibitem{KS}
W.~H. Kautz and R.~C. Singleton,
``Nonrandom binary superimposed codes,''
\emph{IEEE Trans. Inf. Theory}, vol.~10, no.~4,
pp.~363--377, Oct.~1964,
doi: \href{https://doi.org/10.1109/TIT.1964.1053689}{10.1109/TIT.1964.1053689}.

\end{thebibliography}
\end{document}